\newtheorem{assumption}{Assumption}
\newtheorem{theorem}{Theorem}
\newtheorem{proposition}{Proposition}
\newtheorem{lemma}{Lemma}
\newtheorem{remark}{Remark}
\def\rea{\mathbb{R}}
\newcommand{\diag}{\mbox{\tt diag}}
\begin{document}
	\title{{Extended balancing of continuous LTI systems: a structure-preserving approach}}
\date{}
\author{Pablo Borja$^{1}$, Jacquelien M.A. Scherpen$^{1}$, and Kenji Fujimoto$^{2}$\\[0.1cm]\small
 $^{1}$Jan C. Willems Center for Systems and Control, ENTEG, FSE,\\\small 
        %Nijenborgh 4, 9747 AG Groningen, 
        University of Groningen. The Netherlands.\\\small Email:{\tt l.p.borja.rosales[j.m.a.scherpen]@rug.nl}\\[0.05cm]\small
$^{2}$ Dept of Aeronautics and Astronautics,\\\small Kyoto University, Japan.\\\small Email:{\tt fujimoto@kuaero.kyoto-u.ac.jp.}}
\maketitle
%
%%%%%%%%%%%%%%%%%%%%%%%%%%%%%%%%%%%%%%%%%%%%%%%%%%%%%%%%%%%%%
\begin{abstract}
In this paper, we treat extended balancing for continuous-time linear time-invariant systems, and we address the problem of structure-preserving model reduction of the subclass of port-Hamiltonian systems. We establish sufficient conditions to ensure that the reduced-order model preserves a port-Hamiltonian 
structure. Moreover, we show that the use of extended Gramians can be exploited to get a small error bound and, possibly, to preserve a physical interpretation for the reduced-order model.

\end{abstract}
%%%%%%%%%%%%%%%%%%%%%%%%%%%%%%%%%
%
\noindent {\bf Keywords:}
port-Hamiltonian systems, model reduction, extended Gramians, error bound.
%
%%%%%%%%%%%%%%%%%%%%%%%%%%%
\section{Introduction}
\label{sec:int}
%%%%%%%%%%%%%%%%%%%%%%%%%%%%

% High-order mathematical models have become more and more common in different areas of engineering, this phenomenon has different reasons, e.g., 
% the development of new technologies, the high scale integration of electrical circuits, the use of networks theory to study complex processes, 
% etc. Nonetheless, the manipulation of these high-order models can be a daunting task, and therefore, a lower-order approximation is necessary
% for different purposes, including simulation and control design. Furthermore, when the original model represents a physical system, it might be
% highly desirably to obtain a reduced order model that has a physical interpretation or, at least, preserves some appealing features of the original model, e.g., stability,
% passivity or even a particular structure.

Balancing is a tool that is often used for model reduction purposes, giving rise to the balanced truncation methodology. This approach  
relies on realization theory, observability and 
controllability Gramians and is directly related to the concept of Hankel operator of a system. Moreover, since its introduction in the seminal work of Moore \cite{MOORE}, balancing for stable linear systems has been 
extensively studied, in particular, a thorough exposition of this topic can be found in \cite{ANT}, while in \cite{SHandbook11} a brief tutorial is presented, which provides the 
basis for extending the results to nonlinear systems \cite{FS10}.

Balanced truncation, based on the use of standard observability and controllability Gramians, preserves some appealing properties of the original system, e.g., asymptotic stability, observability and controllability.
Furthermore, it is possible to establish an error bound, which is given in terms of the so-called Hankel singular values \cite{GLO} corresponding to the truncated states. Nevertheless, in this
standard formulation of balanced truncation some properties of the original system, like passivity or particular structures, are not necessarily preserved. Another possible drawback of this approach takes place when  
the Hankel singular values are large, which gives origin to a large error bound. Accordingly, with the aim of dealing with the latter issue, the use of the so-called generalized 
Gramians for model reduction purposes was introduced in \cite{HP90}. Where the generalized observability and controllability Gramians are solutions to the respective Lyapunov inequalities,
this differs from the definition of the standard Gramians which are given by the solutions of the Lyapunov equalities. 
Furthermore, it has been proven that it is possible to preserve some important properties while using balanced truncation based on the use 
of generalized Gramians. Moreover, since the solutions of the before mentioned Lyapunov inequalities are 
not unique, generalized Gramians can be used to obtain a smaller error bound \cite{DULL}, and in some cases, to preserve some interesting structures \cite{CS17}.

A further extension of balanced truncation can be formulated by using the concept of extended Gramians, which, for the discrete-time versions were introduced in \cite{SAND10}; and a preliminary continuous-time
counter part of these results was recently reported in \cite{SF18}. The discrete-time and continuous-time methods are rather different, except from the fact that the disspativity theory plays a fundamental role in both to establish the error bound. 
In this approach, referred as extended balancing, the Gramians are solutions to specific linear matrix inequalities 
(LMIs) and, in contrast to other balancing methods, the error bound is obtained by using dissipativity arguments \cite{W02} and not through a transfer function approach. Furthermore,  
this balancing method provides more degrees of freedom to impose certain structure to the reduced order model, 
and can be potentially useful to improve the error bound.

In this work, we focus on the extended balanced truncation of continuous-time linear time-invariant (CTLTI) systems, where we are interested in the versatility
of this methodology to preserve particular structures. Notably, we pay special attention to CTLTI port-Hamiltonian (PH) systems which
are suitable to represent several physical systems, e.g., RLC circuits and mechanical systems; and are endowed with interesting properties, such as passivity. 
Therefore, the objective of this work is not only to reduce the order of the original system, but also to preserve its PH structure. 
Towards this end, we first study extended balanced truncation for CTLTI systems, and then we focus on its application to CTLTI PH with structure preservation purposes. 
The main contributions of this paper are given as follows:
\begin{itemize}
\item We recall the results from \cite{SF18}, and provide proofs for the error bound computation which turn out to be rather different than in the dicrete-time case.
 \item We identify a family of generalized Gramians that are suitable for balanced truncation of CTLTI PH systems with PH structure preservation. To the best of our knowledge, 
 the characterization of these solutions to the Lyapunov inequalities is new.
 \item The use of extended balancing as a tool to design a small error bound. Moreover, we show with an illustrative example
 that this approach can be used to preserve more particular structures, like RLC circuits structure, and a physical interpretation for the reduced order model. 
\end{itemize}

The remainder of the paper is organized in the following manner: we provide the basic background in Section \ref{sec:ppf}, while the fundamental notion of extended Gramians 
and the computation of the error bound are presented in Section \ref{sec:eg}. In Section \ref{sec:bfphs}, we introduce the generalized and extended balancing of PH systems with structure preservation.
We present two illustrative examples in Section \ref{sec:ex},  where the use of extended Gramians in the second example allow us to preserve an even more particular structure than the PH one,
that is, the reduced order system is physically interpretable as an RLC circuit again.
Finally, in Section \ref{sec:cr} we wrap-up this note with some concluding remarks.
\\[.2cm]
{\bf Notation:} We assume that all the matrices have exclusively real entries. Consider a symmetric matrix $A\in \rea^{n \times n}$, then $A$ is positive semi-definite if $x^{\top}Ax\geq 0,\; \forall x\in\rea^{n}$. 
Moreover, $A$ is positive definite if $x^{\top}Ax> 0,\; \forall x\in\rea^{n}\backslash\{0\}$. The identity matrix is denoted as $I$, when necessary a subscript is added to indicate the dimension of the matrix. The symbol $\mathbf{0}_{q\times p}$ denotes a matrix of 
dimensions $q \times p$
whose entries are zeroes
The set of positive real numbers is expressed as 
$\rea_{>0}$, while, the set of nonnegative real numbers is denoted by $\rea_{\geq0}$. In the sequel, the symbol $\Lambda$ is reserved for diagonal matrices with positive entries, that is, 
the square matrix $\Lambda\in \rea^{n\times n}$ is given by $\Lambda=\diag\{\sigma_{1}, \cdots, \sigma_{n} \}$, where $\sigma_{i}\in \rea_{>0}$, for $i=1, \cdots n$. The symbol $U$ is reserved to orthogonal matrices, that is, 
$UU^{\top}=I$. Consider the vector $x\in\rea^{n}$, then $\lvert x \rvert$ denotes the Euclidean norm of $x$, that is, $\lvert x \rvert = \sqrt{x^{\top}x}$. Let $e\in \rea^{n}$ be a signal, then 
$\lVert e \rVert_{2}$ denotes the $\mathcal L_{2}$ norm of $e$, namely, $\lVert e \rVert_{2} = \left(\displaystyle\int_{0}^{\infty} \lvert e(t) \rvert^{2} dt\right)^{\frac{1}{2}}$.

%%%%%%%%%%%%%%%%%%%%%%%%%%%%
\section{Preliminaries}
\label{sec:ppf}
%%%%%%%%%%%%%%%%%%%%%%%%%%%%

Consider a continuous-time linear time-invariant (CTLTI) system described as
\begin{equation}
\mathcal{\varSigma}:\left\lbrace\begin{array}{rcl}
\dot{x}&=& Ax+Bu \\[.07cm] y&=& Cx,        
       \end{array}\right.\label{LTI}
\end{equation}
where $x\in \rea^{n}$ is the state vector, for $m\leq n$, $u\in \rea^{m}$ is the input vector and $y\in \rea^{q}$ denotes the output vector. Accordingly, $A\in \rea^{n \times n}$,
$B\in \rea^{n\times m}$ and $C\in \rea^{q\times n}$.
Assume that the system \eqref{LTI} is {\it asymptotically} stable, thus, the so-called \textit{generalized observability Gramians} $Q\in \rea^{n\times n}$ are positive semi-definite solutions to the following Lyapunov inequality
\begin{equation}
 QA+A^{\top}Q+C^{\top}C\leq 0 \label{go}.
\end{equation}
Analogously, the \textit{generalized controllability Gramians} $\breve P\in \rea^{n\times n}$ are given by positive semi-definite solutions to 
\begin{equation}
 A\breve P+\breve PA^{\top}+BB^{\top}\leq0. \label{gc}
\end{equation}

In particular, when \eqref{go} and \eqref{gc} are equalities, the matrices $Q$ and $\breve P$ are known as the \textit{standard} observability and controllability Gramian, respectively.
For further details, we refer the reader to \cite{ANT}.
%%%%%%%%%%%%%%%%%%%%%%%%%%%%%%%%%%%%%%%%%%%
\subsection{Generalized balanced truncation for LTI}\label{sec:GBT}

A CTLTI system is said to be \textit{generalized balanced} if 
\begin{equation}
 Q=\breve P=\Lambda_{QP},
\end{equation} 
where $\Lambda_{QP}>0$ is a diagonal matrix, see the Notation section.
Accordingly, balancing for LTI systems, \cite{MOORE}, relies on obtaining an invertible state transformation 
\begin{equation}
 \bar{x}=W^{-1}_{g}x \label{tr}
\end{equation} 
such that
\begin{equation}
 W^{-1}_{g}\breve PQW_{g}=\Lambda_{QP}^{2}, \label{balPQ}
\end{equation} 
where we assume that the elements of $\Lambda_{QP}=\diag\{\sigma_{1}, \cdots, \sigma_{n}\}$ are ordered from largest to smallest, that is, $\sigma_{i}>\sigma_{i+1}$, for $i=1,\cdots, n-1$.
Model reduction based on balancing is carried out by truncating the states corresponding to the small elements of $\Lambda_{QP}$, i.e., 
if $\sigma_i >> \sigma_{i+1}$, then we set
\begin{equation}
 \bar x_{i+1}=\cdots = \bar x_n=0.
\end{equation} 
The error bound is given by the sum of the truncated singular values \cite{GLO}, i.e.,
\begin{equation}
 \lVert \varSigma - \widehat{\varSigma}\rVert_\infty \leq 2 \sum_{j=i+1}^n \sigma_j, \label{generror}
\end{equation}
where $\widehat{\varSigma}$ corresponds to the realization of the reduced order system. 
For a more elaborated exposition of balancing and the corresponding reduced order model properties, we refer the reader to \cite{ZDG96}. At this point, we highlight that 
the error bound obtained through generalized balanced truncation is lower than the one obtained with the use of standard Gramians, for further details see \cite{HP90}.
% 
% Accordingly, in balanced truncation we look for a full rank matrix $W\in \rea^{n \times n}$ such that
% 
% Moreover, if we define the new coordinates $\bar{x}=Wx$ we get
% \begin{equation}
%  \begin{array}{rcl}
%   \dot{\bar x} &=& \underbrace{WAW^{-1}}_{\bar A}\bar{x}+\underbrace{WB}_{\bar B}u \\
%   y &=& CW^{-1}\bar{x}
%  \end{array}
% \end{equation} 

%%%%%%%%%%%%%%%%%%%%%%%%%%%
\section{Extended balanced truncation}
\label{sec:eg}
%%%%%%%%%%%%%%%%%%%%%%%%%%%

The generalized balanced truncation approach can be extended by considering the so-called \textit{extended Gramians} instead of the generalized ones.
This extension has two main advantages: on one hand, the error bound can be reduced as has been shown in \cite{SAND08} for the discrete-time case. On the other hand, the use of 
extended Gramians provides extra degrees of freedom which can be exploited to impose a certain structure on the reduced order system.

In this section we revisit and significantly improve the concept of extended balanced truncation for the continuous-time case, which was first introduced in \cite{SF18}. Towards this end, we 
introduce the following assumption which is necessary to establish the concept of extended Gramians.\\[.1cm]
\begin{assumption}\label{A1}\em
There exist strictly positive solutions, $Q, \ \breve P$, to inequalities \eqref{go} and \eqref{gc}.
\end{assumption}

We stress the fact that if the system \eqref{LTI} is controllable and observable, then Assumption \ref{A1} holds. Nonetheless, this latter condition is sufficient but not necessary, thus, 
might be conservative. Moreover, if Assumption \ref{A1} is satisfied, then we can define
\begin{equation}
 P:=\breve P^{-1}. \label{P2Pt}
\end{equation} 
Note that $P$ is a positive definite matrix.\\[0.1cm]
Before introducing the concept of extended Gramians we define the following matrices
\begin{equation*}
 \begin{array}{rcl}
  A_{o}&:=& \alpha I_{n}+A,\\
  A_{c}&:=& \beta I_{n}+A, \\
  X_{o}&:=&-QA-A^{\top}Q-C^{\top}C,\\
  X_{c}&:=&-PA-A^{\top}P-PBB^{\top}P,\\
  Y_{c}&:=&-P+(A_{c}^{\top}+PBB^{\top})T,
 \end{array}
\end{equation*} 
where $P$ is defined in \eqref{P2Pt}, $\alpha\in \rea_{>0}$, and $\beta\in \rea_{\geq0}$. Furthermore, from \eqref{go} and \eqref{gc}, it follows that $X_{o}\geq0$, $X_{c}\geq0$.\\[0.1cm]
The definition of extended Gramians is the starting point of the theory contained in the following sections of this paper. These concepts were introduced for CTLTI systems without proofs in
\cite{SF18}. Below we present the, slightly altered, results and their corresponding proof.\\[.1cm]
%%%%%%%%%%%%%%
%%%%%%%%%%%%%%
\textit{Extended Gramians.} Consider the following two LMIs.
\begin{equation}
\begin{bmatrix} 
X_{o} & Q -A_{o}^{\top}S \\[0.1cm] Q - S^\top A_{o}& S + S^\top 
\end{bmatrix} \geq 0
\label{LMIobs}
\end{equation}
and 
\begin{equation}
\begin{bmatrix} 
-PA-A^\top P & -P+A_{c}^{\top} T & -2PB  \\[0.1cm] -P +T^\top A_{c} & T + T^\top & 2 T^\top B \\[0.1cm] -2B^\top P & 2B^\top T & 4I_{m} 
\end{bmatrix} \geq 0 \label{LMIcontr}
\end{equation}
with $T,S \in \mathbb{R}^{n \times n}$. We call (\ref{LMIobs}) and (\ref{LMIcontr}) the extended observability and controllability LMIs with extended observability Gramian $(Q, S, \alpha)$ 
and extended inverse controllability 
Gramian $(P, T, \beta)$, respectively. \\[.2cm]
%%%%%%%%%%%%%%%%%%
%%%%%%%%%%%%%%%%%%
Now we are in position to formulate the relation between the generalized observability Gramian and the extended observability Gramian.\\
%%%%%%%%%%%%%%%%%%%
\begin{theorem} {\it (observability Gramians)}
\label{tm:obs}\\
The inequality \eqref{go} has a solution $Q>0$ if and only if the LMI \eqref{LMIobs} admits a solution $(Q,S,\alpha)$ with $Q > 0$, $(S+S^{\top})\geq 0$, and $\alpha$ large enough. 
Moreover, if $X_{o}>0$, then there exists an $\alpha$ large enough, and $S=S^{\top}>0$ such that the LMI \eqref{LMIobs} holds.
\end{theorem}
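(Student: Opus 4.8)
The plan is to prove the two implications of the equivalence separately and then the refinement, since the forward direction and the \emph{moreover} statement rely on genuinely different choices of the free matrix $S$. The \emph{sufficiency} direction (LMI $\Rightarrow$ \eqref{go}) is immediate: if \eqref{LMIobs} holds for some $(Q,S,\alpha)$ with $Q>0$, then its leading principal block $X_{o}$ is positive semi-definite, because every principal submatrix of a positive semi-definite matrix is positive semi-definite. By the definition of $X_{o}$ this is exactly inequality \eqref{go}, so $Q>0$ solves \eqref{go}. Note this step is independent of $\alpha$ and of the sign of $S+S^{\top}$.

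For the \emph{necessity} direction, suppose $Q>0$ solves \eqref{go}, equivalently $X_{o}\geq 0$. Since $A$ is asymptotically stable its eigenvalues have negative real part, so $A_{o}=\alpha I_{n}+A$ is invertible once $\alpha$ exceeds the largest modulus of those real parts. I would then set $S:=A_{o}^{-\top}Q$, which annihilates the off-diagonal block, $Q-A_{o}^{\top}S=0$. The matrix in \eqref{LMIobs} then reduces to a block-diagonal matrix with diagonal blocks $X_{o}$ and $S+S^{\top}$, so it only remains to verify $S+S^{\top}\geq 0$. Using the factorization $S+S^{\top}=A_{o}^{-\top}\big(QA_{o}+A_{o}^{\top}Q\big)A_{o}^{-1}$ together with the identity $QA_{o}+A_{o}^{\top}Q=2\alpha Q-X_{o}-C^{\top}C$, positivity of $S+S^{\top}$ follows by congruence from $2\alpha Q\geq X_{o}+C^{\top}C$, and the latter holds for $\alpha$ large enough because $Q>0$ while $X_{o}+C^{\top}C$ is fixed.

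For the \emph{moreover} statement, where $X_{o}>0$ and a symmetric $S$ is required, the asymmetric matrix $A_{o}^{-\top}Q$ is no longer admissible, so I would instead take $S:=\tfrac{1}{\alpha}Q=S^{\top}>0$. With this choice the off-diagonal block becomes $Q-A_{o}^{\top}S=-\tfrac{1}{\alpha}A^{\top}Q$ and the $(2,2)$ block equals $\tfrac{2}{\alpha}Q>0$, so I can apply the Schur complement with respect to the $(2,2)$ block. The resulting condition collapses to $X_{o}-\tfrac{1}{2\alpha}A^{\top}QA\geq 0$, which holds for $\alpha$ large enough precisely because $X_{o}>0$ eventually dominates the fixed term $A^{\top}QA$; this yields $S=S^{\top}>0$ as claimed.

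The recurring technical point, and the main obstacle, is the correct bookkeeping of ``$\alpha$ large enough'': in each construction one must guarantee that a fixed positive definite matrix ($Q$ in the necessity direction, $X_{o}$ in the refinement) ultimately dominates a fixed remainder, and this is exactly where the strict positivity hypotheses ($Q>0$, respectively $X_{o}>0$) are indispensable. Equally delicate is the choice of $S$ controlling the off-diagonal block: the asymmetric choice $A_{o}^{-\top}Q$ zeroes it outright and therefore copes with the merely semidefinite case $X_{o}\geq 0$, whereas the symmetric choice $\tfrac{1}{\alpha}Q$ drives it only to order $1/\alpha$, which is why the symmetric refinement must invoke the strict inequality $X_{o}>0$ to absorb the residual Schur-complement term.
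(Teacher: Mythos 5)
Your proof is correct, and the equivalence itself is argued exactly as in the paper: the same restriction to the leading block for one direction, and for the other the same choice $S=A_{o}^{-\top}Q$ (with $-\alpha$ outside the spectrum of $A$), the same factorization $S+S^{\top}=A_{o}^{-\top}\left(QA_{o}+A_{o}^{\top}Q\right)A_{o}^{-1}$, and the same large-$\alpha$ domination $2\alpha Q\geq X_{o}+C^{\top}C$. You part ways only in the \emph{moreover} step. There the paper exhibits a whole family of symmetric solutions $S=Q\left(\alpha Q+\Gamma_{o}\right)^{-1}Q$ indexed by any symmetric $\Gamma_{o}$ satisfying \eqref{posQ}, and verifies it by a congruence followed by a Schur complement on the $(1,1)$ block $X_{o}$, arriving at \eqref{obsym}; your choice $S=\tfrac{1}{\alpha}Q$ is precisely the member $\Gamma_{o}=\mathbf{0}_{n\times n}$ of that family, and your Schur complement on the $(2,2)$ block, which reduces to $X_{o}-\tfrac{1}{2\alpha}A^{\top}QA\geq 0$, is an equivalent and somewhat cleaner verification. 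For the theorem as stated, which is a pure existence claim, your specialization loses nothing. What the paper's extra generality buys is the free parameter $\Gamma_{o}$ itself: it is flagged in the text as a design degree of freedom, it enters the structure-preserving construction of Proposition \ref{prop:ex2} through \eqref{symS}, and it is chosen nonzero in the RLC example of Section \ref{sec:ex} to make the entries of $\Lambda_{ST}$ distinct so that a truncation criterion exists. So the paper's proof doubles as the construction exploited later in the paper, whereas yours is the minimal argument needed for the theorem.
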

%%%%%%%%%%%%%%%%%%%
\begin{proof}
{\it Only if.} Assume that \eqref{LMIobs} has a solution $(Q,S,\alpha)$, then multiplying \eqref{LMIobs} by $[I_n \; \; \mathbf{0}_{n\times n}]$ from the left and 
by $[I_n \; \; \mathbf{0}_{n\times n}]^{\top}$ from the right, it follows that \eqref{go} admits a solution $Q > 0$.\\
{\it If.} Assume there exists $Q > 0$ solving \eqref{go}. Select $S= A^{-\top}_{o}Q$, with $-\alpha$ not an eigenvalue of $A$. Then, the off-diagonal blocks of 
\eqref{LMIobs} are zero. Furthermore, 
\begin{equation}
 S+S^{\top}=A_{o}^{-\top}Q + QA_{o}^{-1}.
\end{equation} 
Accordingly, we have the following equivalence
\begin{equation}
 \begin{array}{l}
  0\leq S+S^{\top}\Longleftrightarrow \\[.1cm]
  \begin{array}{lcl}
   0\leq A_{o}^{\top}(S+S^{\top})A_{o}&=&A_{o}^{\top}Q+QA_{o}\\
  &=&2\alpha Q - C^{\top} C-X_{o}.
  \end{array}
 \end{array}\label{poscond}
\end{equation} 
Note that, since $X_{o}$ does not depend on $\alpha$, the inequality \eqref{poscond} holds for $\alpha$ large enough. Hence, there exist $Q>0$ and $\alpha>0$ such that LMI \eqref{LMIobs} holds.\\
\textit{Symmetric $S$.} Assume that $Q>0$ and $X_{o}>0$. Consider a symmetric matrix $\Gamma_{o}\in \rea^{n\times n}$
verifying
\begin{equation}
 \alpha Q+\Gamma_{o}>0. \label{posQ}
\end{equation} 
Select
\begin{equation}
 S=Q\left(\alpha Q+\Gamma_{o}\right)^{-1}Q. \label{symS}
\end{equation} 
Hence, $S=S^{\top}>0$. Now,
multiply \eqref{LMIobs} by $\diag\{ I_{n},QS^{-1}\}$ from the left and by $\diag\{ I_{n},S^{-1}Q\}$ from the right, yielding 
 \begin{equation}
 \begin{array}{l}
 \begin{bmatrix}
   X_{o} & QS^{-1}Q-A_{o}^{\top}Q \\[.1cm] QS^{-1}Q-QA_{o} & 2QS^{-1}Q 
 \end{bmatrix}\\[.5cm] \; \; \; \; \; \; \; \; \; \; \; \; \; \; \; \; \; \; \; \; \; \; \; \;= \begin{bmatrix}
                            X_{o} & \Gamma_{o}-A^{\top}Q \\ \Gamma_{o}-QA & 2(\alpha Q+\Gamma_o)  
                            \end{bmatrix}
 \geq0.
\end{array}
 \label{eot}
 \end{equation}
Furthermore, LMI \eqref{eot} is equivalent through Schur complement to
\begin{equation}
 2\alpha Q+2 \Gamma_{o}-\Theta_{o}\geq 0, 
\label{obsym}
\end{equation} 
with
\begin{equation*}
 \Theta_{o}:=(\Gamma_{o} - QA)X_{o}^{-1}(\Gamma_{o} -A^{\top}Q).
\end{equation*} 
Note that there exists $\alpha$, large enough, such that \eqref{obsym} is satisfied. This completes the proof.
 \end{proof}
The results on generalized and extended observability Gramians have a controllability version as follows.\\
%%%%%%%%%%%%%%%%%%%%
\begin{theorem}
{\it (controllability Gramians)}
\label{tm:contr}\\
The inequality \eqref{gc} has a solution $\breve P>0$ if and only if the LMI \eqref{LMIcontr} has a solution 
$(P,T,\beta)$ with $P > 0$. Furthermore, if $X_{c}>0$, then there exists a $\beta>0$ large enough, and $T=T^{\top}>0$ such that 
the LMI \eqref{LMIcontr} holds.\vspace{0.1cm}
\end{theorem}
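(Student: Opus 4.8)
The plan is to mirror the proof of Theorem \ref{tm:obs} as closely as possible, isolating the fact that \eqref{LMIcontr} contains quadratic-in-$B$ terms that have no observability counterpart. The first step, which is the structural heart of the argument, is a Schur-complement reduction. Since the $(3,3)$ block of \eqref{LMIcontr} equals $4I_{m}>0$, taking the Schur complement with respect to it shows that \eqref{LMIcontr}$\,\geq 0$ is equivalent to the $2n\times 2n$ inequality
\begin{equation*}
\begin{bmatrix} X_{c} & Y_{c}\\[0.1cm] Y_{c}^{\top} & T+T^{\top}-T^{\top}BB^{\top}T\end{bmatrix}\geq 0,
\end{equation*}
where $X_{c}=-PA-A^{\top}P-PBB^{\top}P$ and $Y_{c}=-P+(A_{c}^{\top}+PBB^{\top})T$ are exactly the matrices defined before the theorem. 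This places \eqref{LMIcontr} in the same format as \eqref{LMIobs}, with the correspondence $(X_{o},\,Q-A_{o}^{\top}S,\,S+S^{\top})\leftrightarrow(X_{c},\,Y_{c},\,T+T^{\top}-T^{\top}BB^{\top}T)$.

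For the \emph{only if} direction I would assume \eqref{LMIcontr} holds with $P>0$ and simply read off the $(1,1)$ block of the reduced inequality, namely $X_{c}\geq 0$, i.e.\ $PA+A^{\top}P+PBB^{\top}P\leq 0$. A congruence transformation by $\breve P=P^{-1}$ then turns this into $A\breve P+\breve PA^{\top}+BB^{\top}\leq 0$, so $\breve P>0$ solves \eqref{gc}. For the \emph{if} direction I would start from a solution $\breve P>0$ of \eqref{gc}, set $P=\breve P^{-1}>0$ so that $X_{c}\geq 0$, and choose $T$ to annihilate the off-diagonal block exactly as $S=A_{o}^{-\top}Q$ does in Theorem \ref{tm:obs}. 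Since $A_{c}^{\top}+PBB^{\top}=\beta I_{n}+A^{\top}+PBB^{\top}$ is invertible for $\beta$ large enough, the choice $T=(A_{c}^{\top}+PBB^{\top})^{-1}P$ gives $Y_{c}=0$, so the reduced inequality becomes the block-diagonal matrix $\diag\{X_{c},\,T+T^{\top}-T^{\top}BB^{\top}T\}$; one then checks the remaining block is nonnegative using the expansion $T=\tfrac{1}{\beta}P+O(\beta^{-2})$, which gives $T+T^{\top}-T^{\top}BB^{\top}T=\tfrac{2}{\beta}P+O(\beta^{-2})>0$ for $\beta$ large.

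For the \emph{symmetric $T$} claim I would assume $X_{c}>0$ and take the explicit symmetric choice $T=\tfrac{1}{\beta}P=T^{\top}>0$. Taking the Schur complement of the reduced inequality with respect to $X_{c}>0$ reduces the claim to $2T-T^{\top}BB^{\top}T-Y_{c}^{\top}X_{c}^{-1}Y_{c}\geq 0$. With this $T$ one computes $Y_{c}=\tfrac{1}{\beta}(A^{\top}P+PBB^{\top}P)=O(\beta^{-1})$, so both $T^{\top}BB^{\top}T$ and $Y_{c}^{\top}X_{c}^{-1}Y_{c}$ are $O(\beta^{-2})$ and the residual equals $\tfrac{2}{\beta}P+O(\beta^{-2})$, which is positive definite for $\beta$ large enough.

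The main obstacle, and the only genuine departure from Theorem \ref{tm:obs}, is the control of the terms $-T^{\top}BB^{\top}T$, $PBB^{\top}T$ and $PBB^{\top}P$, which have no observability analogue and which prevent the off-diagonal block from being killed by a purely $A_{c}$-linear choice of $T$. The key idea that makes them harmless is to scale $T$ like $\beta^{-1}$: then every $B$-quadratic term is $O(\beta^{-2})$ and is dominated by the positive linear contribution $\tfrac{2}{\beta}P$, so positivity is recovered for $\beta$ large. A secondary subtlety is that in the general \emph{if} case one only has $X_{c}\geq 0$, so one cannot Schur-complement on $X_{c}$; this is precisely why that case is handled by zeroing $Y_{c}$ (block-diagonalizing) rather than by the symmetric construction, which requires the strict inequality $X_{c}>0$.
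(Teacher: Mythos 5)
Your proposal is correct, and its skeleton (Schur reduction to \eqref{ect}, reading off the $(1,1)$ block for the \emph{only if} part, an explicit $T$ for the \emph{if} part, and a Schur complement on $X_c>0$ for the symmetric case) matches the paper's; but the key construction in the \emph{if} direction is genuinely different. You transplant the device from Theorem \ref{tm:obs}: choose $T=(A_c^\top+PBB^\top)^{-1}P$ so that $Y_c=0$, block-diagonalize \eqref{ect}, and then control the remaining block $T+T^\top-T^\top BB^\top T$ by a Neumann-series expansion in $1/\beta$, which forces $\beta$ to be taken large. The paper instead fixes $T=P(\beta I_n-A)^{-1}$, which does \emph{not} annihilate the off-diagonal block (one gets $Y_c=-X_c\breve PT$); it then applies the congruence $\diag\{I_n,\,PT^{-\top}\}\,(\cdot)\,\diag\{I_n,\,T^{-1}P\}$ to \eqref{ect}, turning it into
\begin{equation*}
\begin{bmatrix} X_c & -X_c\\ -X_c & X_c\end{bmatrix}+\begin{bmatrix}\mathbf{0}_{n\times n} & \mathbf{0}_{n\times n}\\ \mathbf{0}_{n\times n} & 2\beta P\end{bmatrix}\geq 0,
\end{equation*}
which is manifestly true for \emph{every} $\beta\geq 0$ since the first summand equals $\col(I_n,-I_n)\,X_c\,\col(I_n,-I_n)^\top$. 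So the paper's argument is exact, free of asymptotics, and imposes no largeness condition on $\beta$, whereas yours buys structural parallelism with the observability proof at the price of an asymptotic estimate and a large $\beta$. For the symmetric-$T$ claim, your choice $T=\tfrac{1}{\beta}P$ is precisely the paper's construction \eqref{Tsel} specialized to $\Gamma_c=\mathbf{0}_{n\times n}$, and both arguments conclude identically (Schur complement on $X_c>0$, then $\beta$ large); the paper retains the free symmetric parameter $\Gamma_c$ not because the theorem needs it, but because it is the degree of freedom exploited later for structure preservation and error-bound tuning in Sections \ref{sec:ebph} and \ref{sec:ex}, so for the bare existence statement your specialization suffices.
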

%%%%%%%%%%%%%%%%%%%%
\begin{proof}
In order to establish the proof, note that \eqref{LMIcontr} is equivalent to the following LMI
\begin{equation}
\begin{bmatrix}
   X_{c} &  Y_{c}\\ Y_{c}^{\top} & T+T^{\top}-T^{\top}BB^{\top}T
\end{bmatrix}\geq0. \label{ect}
 \end{equation}
 {\it Only if.} Assume that \eqref{LMIcontr} admits a solution $(P,T,\beta)$ with $P>0$, thus equivalently, \eqref{ect} is satisfied. 
 Multiplying the latter LMI by $[I_n \; \; \mathbf{0}_{n\times n}]$ from the left and 
by $[I_n \; \; \mathbf{0}_{n\times n}]^{\top}$ from the right, it follows that 

\begin{equation}
 \begin{array}{rcl}
  X_{c}&\geq& 0 \\[0.1cm]
 \Longleftrightarrow -PA-A^{\top}P-PBB^{\top}P&\geq& 0,\\[0.1cm]
 \Longleftrightarrow A\breve P+\breve PA^{\top}+BB^{\top}&\leq&0,
 \end{array} 
\end{equation} 
where we used \eqref{P2Pt} to obtain the last inequality.\\
{\it If.} Assume there exists $\breve P > 0$ solution to \eqref{gc}.
Fix\footnote{Since $\beta\geq0$ and $\Re\{ \lambda(A) \}<0$, $\beta$ is not an eigenvalue of $A$.} $T= P(\beta I_{n}-A)^{-1}$, with $P$ defined in \eqref{P2Pt}, then we get

\begin{equation}
 \begin{array}{rcl}
  Y_{c}&=&-P+(A_{c}^{\top}+PBB^{\top})P(\beta I_{n}-A)^{-1}\\[0.1cm]
%  &=& -P+(\beta P + A^{\top}P+PBB^{\top}P)(\beta I_{n}-A)^{-1} \\
  &=& -P + (\beta P - PA - X_{c})(\beta I_{n}-A)^{-1}\\[0.1cm]
%  &=& -P + P (\beta I_{n}-A-X_{c}P) (\beta I_{n}-A)^{-1}\\
  &=& -X_{c}(\beta I_{n}-A)^{-1}\\[0.1cm]
  &=& -X_{c}\breve PT,
 \end{array}
\end{equation} 

and

\begin{equation}
 \begin{array}{rcl}
  T+T^{\top}-T^{\top}BB^{\top}T&=& T^{\top}\left(T^{-1}+T^{-\top}-BB^{\top}\right)T\\[0.1cm]
  %&=&T^{\top}\left[(\beta I_{n}-A)\breve{P}+\breve{P}(\beta I_{n}-A^{\top})  \right]T\\
  &=&T^{\top}\breve P\left(2\beta P+X_{c}\right)\breve PT.
 \end{array}
\end{equation}

Hence, LMI \eqref{ect} takes the form 

\begin{equation}
\begin{bmatrix}
 X_{c} & -X_{c}\breve PT \\ T^{\top}\breve PX_{c} & T^{\top}\breve P\left(2\beta P+X_{c}\right)\breve PT
\end{bmatrix}\geq0.
\label{ect1}
\end{equation} 

Now, we multiply \eqref{ect1} by $\diag\{I_{n}, PT^{-\top} \}$ from the left, and by $\diag\{I_{n}, T^{-1}P \}$, yielding 

\begin{equation}
 \begin{bmatrix}
  X_{c} & -X_{c} \\ -X_{c} & X_{c}  
 \end{bmatrix}
+\begin{bmatrix}
  \mathbf{0}_{n\times n} & \mathbf{0}_{n\times n} \\ \mathbf{0}_{n\times n} & 2\beta P
 \end{bmatrix}\geq0
\end{equation} 
which holds for every $\beta\geq 0$.\\
\textit{Symmetric $T$.} Assume that $P>0$ and $X_{c}>0$. 
Consider a symmetric matrix $\Gamma_{c}\in \rea^{n\times n}$
verifying
\begin{equation}
 \beta \breve{P}+\Gamma_{c}>0. \label{posT}
\end{equation} 
Select
\begin{equation}
 T=\left(\beta \breve{P}+\Gamma_{c}\right)^{-1}. \label{Tsel}
\end{equation} 
Hence, $T=T^{\top}>0$.
Multiply \eqref{ect} by $\diag\{I_{n},T^{-\top}\}$ from the left and by
$\diag\{I_{n},T^{-1}\}$ from the right, and substitute \eqref{Tsel} to obtain

\begin{equation}
  \begin{bmatrix}
   X_{c} & -P\Gamma_{c}+A^{\top}+PBB^{\top} \\ -\Gamma_{c}P+A+BB^{\top}P & 2(\beta \breve{P}+\Gamma_{c})-BB^{\top}
  \end{bmatrix}\geq0,
\label{ect2}
\end{equation}
which is equivalent to 
\begin{equation}
 2\beta \breve{P}+2\Gamma_{c}-BB^{\top}-\Theta_{c}\geq0
\end{equation} 
where
\begin{equation*}
  \Theta_{c}:=(-\Gamma_{c}P+A+BB^{\top}P)X_{c}^{-1}(-P\Gamma_{c}+A^{\top}+PBB^{\top}).
 \end{equation*}

 Since $\Theta_{c}$ does not depend on $\beta$, it follows that LMI \eqref{ect2}, and in consequence LMI \eqref{LMIcontr}, holds for $\beta>0$ large enough.
 This completes the proof.
\end{proof}
\begin{remark}
For clarity of presentation, we assume that $X_{o}>0,\ X_{c}>0$ to prove the existence of symmetric solutions to \eqref{LMIobs} and \eqref{LMIcontr}, respectively.
While these conditions are not restrictive, they can be relaxed to $X_{o}\geq0,\ X_{c}\geq0$ by using generalized inverses. This however needs the introduction of the following conditions
\begin{equation}
 \begin{array}{rcl}
  (I_{n}-X_{o}X_{o}^{\dagger})(\Gamma_{o}-A^{\top}Q)&=&\mathbf{0}_{n\times n}\\[0.1cm]
  (I_{n}-X_{c}X_{c}^{\dagger})(-P\Gamma_{c}+A^{\top}+PBB^{\top})&=&\mathbf{0}_{n\times n},
 \end{array}\label{pseudo}
\end{equation} 
where $X_{o}^{\dagger},\ X_{c}^{\dagger}$ denote generalized inverses of $X_{o}$ and $X_{c}$, respectively. Note that both expressions in \eqref{pseudo} are naturally satisfied 
if $X_{o}>0,\ X_{c}>0$.\\
\end{remark}
\begin{remark}
 The symmetric matrices $\Gamma_{o}$ and $\Gamma_{c}$ provide degrees of freedom in the selection of the extended Gramians. These degrees of freedom can be used to improve the error bound in case the Gramians are used for model reduction, see Section \ref{sec:error},
 or to impose a desired structure to the reduced order model as is illustrated in Section \ref{sec:ex}. 
\end{remark}
For the model reduction application, we assume that the matrices $S$ and $T$ are symmetric. From Theorems \ref{tm:obs} and \ref{tm:contr}, it is clear that this assumption is not necessary to ensure the existence of solutions to 
\eqref{LMIobs} and \eqref{LMIcontr}, but we need it for obtaining an error bound in Section \ref{sec:error}.

In the extended balancing approach, a CTLTI system is said to be \textit{extended balanced} if
\begin{equation*}
 S=T^{-1}=\Lambda_{ST},
\end{equation*} 
where $\Lambda_{ST}$ is a diagonal matrix, see the Notation section.
Therefore, we look for an invertible state transformation 
\begin{equation}
 \bar{x}=W_{e}^{-1}x \label{tre}
\end{equation} 
such that
\begin{equation}
 W_{e}^{-1}T^{-1}SW_{e}=\Lambda_{ST}^{2}. \label{balST}
\end{equation} 
Similar to Section \ref{sec:GBT}, we assume that the elements of the diagonal matrix $\Lambda_{ST}$ are ordered from largest to smallest. 
Hence, the order of the CTLTI system is reduced by truncating the states that correspond to the smallest elements of the aforementioned matrix.

The discrete-time version of the LMIs \eqref{LMIobs} and \eqref{LMIcontr} can be found in \cite{OLI99} and \cite{OLI02}. 
While, a thorough exposition of extended balanced truncation for discrete-time
linear time-invariant (DTLTI) systems is given in \cite{SAND08} and \cite{SAND10}.
%%%%%%%%%%%%%%%%%%%%%%%%%%%%%%%%%
%%%%%%%%%%%%%%%%%%%%%%%%%%%%%%%%%
%%%%%%%%%%%%%%%%%%%%%%%%%%%%%%%%%
\subsection{Computation of the error bound}\label{sec:error}
%%%%%%%%%%%%%%%%%%%%%%%%%%%%%%%%%
%%%%%%%%%%%%%%%%%%%%%%%%%%%%%%%%%
%%%%%%%%%%%%%%%%%%%%%%%%%%%%%%%%%
One of the appealing features of the balanced truncation approach is the possibility of establishing a clear error bound. For the generalized balanced truncation 
case, the inequality \eqref{generror} establishes the error bound, which is customarily obtained through the analysis in the frequency domain of the original system and the reduced order one \cite{GLO}, \cite{ZDG96}.
In this subsection we provide a procedure, different from the approach proposed in \cite{SF18}, to compute such error bound for the extended case. Towards this end, we assume that the linear 
transformation $W_{e}$, such that \eqref{balST} holds, is known. Then, we introduce the following state-space systems
\begin{eqnarray}
  \bar{\varSigma}&:&\left\lbrace
 \begin{array}{rcl}
\dot{\bar{x}}&=& \bar A \bar x+\bar Bu \\[.07cm] \bar y&=& \bar C\bar x,
       \end{array}\right.\label{sysbal} \\[0.2cm]
       \varSigma_{r}&:&\left\lbrace\begin{array}{rcl}
\dot{x}_{r}&=& \bar A x_{r}+\bar Bu+v(t) \\[.07cm] y_{r}&=& \bar C x_{r},
       \end{array}\right. \label{sysrred} 
\end{eqnarray}
where $\bar x$ is defined as in \eqref{tre}, $v(t)\in\rea^{n}$ is an external signal, $x_{r}\in\rea^{n}$ is an auxiliary state, and
\begin{equation}
 \begin{array}{rcl}
  \bar A:=W_{e}^{-1}AW_{e}, &
  \bar B:= W_{e}^{-1}B, &
  \bar C:= CW_{e}.
 \end{array}\label{balpar}
\end{equation} 
%Note that the systems $\bar{\varSigma}$ and $\varSigma_{r}$ are identical, except from $v(t)$ and, potentially, the states initial conditions.\\[0.1cm] 
Now, we  split $\bar x$
system into two parts, namely, 
\begin{equation}
 \bar x = \begin{bmatrix}
           \bar x_{1} \\ \bar x_{2}
          \end{bmatrix}, \label{split}
\end{equation} 
where $\bar x_{1}\in \rea^{k}$ is the part of the state to be preserved after the 
reduction of the model and $\bar x_{2}\in \rea^\ell$, with $\ell:=n-k$, is the part to be truncated. Accordingly, the matrices given in \eqref{balpar}
can be expressed as follows
\begin{equation*}
 \begin{array}{rcl}
  \bar A=\begin{bmatrix}
         \bar{A}_{11} & \bar{A}_{12} \\ \bar{A}_{21} & \bar{A}_{22}    
            \end{bmatrix},
 &
  \bar B= \begin{bmatrix}
              \bar{B}_{1} \\ \bar{B}_{1}
             \end{bmatrix},
 &
  \bar C= \begin{bmatrix} \bar{C}_{1} \\ \bar{C}_{2}
                   \end{bmatrix},
 \end{array}
\end{equation*}
with 
\begin{equation*}
 \begin{array}{llll}
  \bar{A}_{11}\in\rea^{k\times k}, & \bar{A}_{12}\in \rea^{k\times \ell}, & \bar{A}_{21}\in \rea^{\ell\times k}, &  \bar{A}_{22}\in \rea^{\ell\times\ell}, \\[0.1cm]
  \bar{B}_{1}\in \rea^{k\times m}, & \bar{B}_{2}\in \rea^{\ell\times m}, & \bar{C}_{1}\in \rea^{q\times k}, & \bar{C}_{2}\in \rea^{q\times \ell}. 
 \end{array}
\end{equation*} 
Thus, the truncation of the state $\bar{x}_{2}$ leads to the following reduced order model
\begin{equation}
 \widehat\varSigma:\left\lbrace\begin{array}{rcl}
\dot{\hat x}&=& \widehat A \hat x+\widehat Bu \\[.07cm] \hat y&=&\widehat C\hat x,        
       \end{array}\right. \label{sysr} 
\end{equation}
where 
\begin{equation*}
 \begin{array}{cccc}
 \hat{x}=\bar{x}_{1},&
\widehat A:=\bar A_{11}, &
\widehat B:= \bar{B}_{1}, &
\widehat C:= \bar{C}_{1}.
 \end{array}
\end{equation*} 
Now, inspired by the ideas presented in \cite{W02}, and by the approach adopted in \cite{SAND08}, \cite{SAND10} for discrete-time, and in \cite{SF18} for continuous-time, 
we propose a storage function that is instrumental to establish the error bound. Towards this end, we first introduce the following definitions to simplify the notation of this section: 
\begin{equation}
 \begin{array}{ll}
   \bar Q := W_{e}^{\top} Q W_{e} , & 
  \bar P := W_{e}^{\top} P W_{e}, \\[0.2cm]
  z_{o}:=\bar x - x_{r}, & z_{c}:=\bar x + x_{r}. 
%   \\[0.2cm]
%   \breve z_{o}:=\phi_{\bar Q}z_{o}, & \breve z_{c}:=\phi_{\bar P}^{-\top}z_{c},  \\[0.2cm]
%   \mathbf{A}_{o}:=\phi_{\bar Q}\bar A\phi_{\bar Q}^{-1}, & \mathbf{A}_{c}:=\phi_{\bar P}^{-\top}\bar A\phi_{\bar P}^{\top}, \\[0.2cm]
%   \mathbf{B}:=\phi_{\bar P}^{-\top}\bar B, & \mathbf{C}:=\bar C\phi_{\bar Q}^{-1}, 
 \end{array}\label{defs}
\end{equation} 
where $P$ is defined as in \eqref{P2Pt}. The proposition below introduces a storage function which is used to establish the error bound in this section.\\
%The following proposition introduces a storage function that is instrumental in the computation of the error bound.\\
%%%%%%%%%%%%%%%%%%%%%%%%%%%%
%%%%%%%%%%%%%%%%%%%%%%%%%%%%
%%%%%%%%%%%%%%%%%%%%%%%%%%%%
\begin{proposition}\label{prop:S}
 Consider the systems $\varSigma,\ \bar\varSigma, \ \varSigma_{r}$ given in \eqref{LTI}, \eqref{sysbal}, and \eqref{sysrred}, respectively.
 Assume that the triplet $(Q, S, \alpha)$ solves LMI \eqref{LMIobs} and the triplet $(P, T, \beta)$ solves LMI \eqref{LMIcontr}.
 Consider the storage function 
 \begin{equation}
 \mathcal{S}( z_{o},  z_{c})=z^{\top}_{o}\bar{Q}z_{o}  + \sigma_{n}^{2}z^{\top}_{c}\bar{P}z_{c} \label{Sto}
\end{equation} 
where $\sigma_{n}$ is the $n-th$ entry of $\Lambda_{ST}$, and ${z}_{o}, {z}_{c}$ are defined in \eqref{defs}. Then,
\begin{equation}
\begin{array}{rcl}
 \dot{\mathcal{S}} & \leq & 4\sigma_{n}^{2}\lvert u \rvert^{2} - \lvert  y - y_{r} \rvert^{2} \\&& + 
 2\left[\sigma_{n}^{2}\left( \beta  z_{c} + \dot{z}_{c}\right)^{\top} \Lambda_{ST}^{-1} - \left(\alpha z_{o} + \dot{z}_{o}\right)^{\top}\Lambda_{ST}\right] v\label{dS}
\end{array}
\end{equation} 
\end{proposition}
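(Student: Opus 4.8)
The plan is to differentiate $\mathcal{S}$ along the trajectories of $\bar\varSigma$ and $\varSigma_{r}$, and then to dominate the result by invoking the extended LMIs \eqref{LMIobs} and \eqref{LMIcontr} expressed in the balancing coordinates. First I would record the error dynamics: from \eqref{sysbal}, \eqref{sysrred} and \eqref{defs} one has
\begin{equation*}
\dot{z}_{o}=\bar A z_{o}-v, \qquad \dot{z}_{c}=\bar A z_{c}+2\bar B u+v,
\end{equation*}
while $y=\bar C\bar x$ and $y_{r}=\bar C x_{r}$ give $\bar C z_{o}=y-y_{r}$. Since $\bar Q$ and $\bar P$ are symmetric, differentiating \eqref{Sto} and symmetrizing the quadratic terms produces
\begin{equation*}
\dot{\mathcal{S}}=\big[z_{o}^{\top}(\bar Q\bar A+\bar A^{\top}\bar Q)z_{o}-2z_{o}^{\top}\bar Q v\big]+\sigma_{n}^{2}\big[z_{c}^{\top}(\bar P\bar A+\bar A^{\top}\bar P)z_{c}+4z_{c}^{\top}\bar P\bar B u+2z_{c}^{\top}\bar P v\big].
\end{equation*}
I would name the first bracket $\Omega_{o}$ and $\sigma_{n}^{2}$ times the second bracket $\Omega_{c}$, and bound each one using a single LMI.

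The decisive preliminary is to transport \eqref{LMIobs} and \eqref{LMIcontr} into balancing coordinates by congruence with $\diag\{W_{e},W_{e}\}$ and $\diag\{W_{e},W_{e},I_{m}\}$, respectively. Under this congruence the blocks become $\bar X_{o}=W_{e}^{\top}X_{o}W_{e}=-(\bar Q\bar A+\bar A^{\top}\bar Q)-\bar C^{\top}\bar C$, $\bar X_{c}=W_{e}^{\top}X_{c}W_{e}=-(\bar P\bar A+\bar A^{\top}\bar P)-\bar P\bar B\bar B^{\top}\bar P$, and $\bar A_{o}=\alpha I+\bar A$, $\bar A_{c}=\beta I+\bar A$; moreover extended balancing normalizes the multipliers so that $\bar S:=W_{e}^{\top}SW_{e}=\Lambda_{ST}$ and $\bar T:=W_{e}^{\top}TW_{e}=\Lambda_{ST}^{-1}$, both symmetric. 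This is consistent with \eqref{balST} because $W_{e}^{-1}T^{-1}SW_{e}=\bar T^{-1}\bar S=\Lambda_{ST}^{2}$.

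For the observability term I would evaluate the transformed \eqref{LMIobs} at the test vector $(z_{o},v)$; substituting $\bar X_{o}=-(\bar Q\bar A+\bar A^{\top}\bar Q)-\bar C^{\top}\bar C$ and $\bar C z_{o}=y-y_{r}$ and rearranging yields
\begin{equation*}
\Omega_{o}\le -|y-y_{r}|^{2}-2(\bar A_{o}z_{o}-v)^{\top}\Lambda_{ST}v=-|y-y_{r}|^{2}-2(\alpha z_{o}+\dot{z}_{o})^{\top}\Lambda_{ST}v,
\end{equation*}
the last equality using $\alpha z_{o}+\dot{z}_{o}=\bar A_{o}z_{o}-v$. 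For the controllability term I would feed the \emph{full} three-block LMI \eqref{LMIcontr} — input block included — the test vector $(z_{c},v,u)$; cancelling the terms $\pm 2z_{c}^{\top}\bar P v$ and $\pm 4z_{c}^{\top}\bar P\bar B u$ against those already in $\Omega_{c}$ gives
\begin{equation*}
\Omega_{c}\le 4\sigma_{n}^{2}|u|^{2}+2\sigma_{n}^{2}\left(z_{c}^{\top}\bar A_{c}^{\top}\Lambda_{ST}^{-1}+v^{\top}\Lambda_{ST}^{-1}+2u^{\top}\bar B^{\top}\Lambda_{ST}^{-1}\right)v,
\end{equation*}
and recognizing $\bar A_{c}z_{c}+2\bar B u+v=\beta z_{c}+\dot{z}_{c}$ collapses the bracket to $(\beta z_{c}+\dot{z}_{c})^{\top}\Lambda_{ST}^{-1}v$. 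Summing the two bounds reproduces \eqref{dS} exactly.

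The main obstacle — really the only delicate part — is the controllability estimate. The naive route, completing the square in $u$ by hand to manufacture the $4\sigma_{n}^{2}|u|^{2}$ term and then handling $v$ with a two-block inequality, leaves an unwanted cross term in $u$ and $v$. The clean resolution is to place $u$ in the third slot of the three-block LMI, so that the $4I_{m}$ block supplies $4\sigma_{n}^{2}|u|^{2}$ and the $(2,3)$ block $2\bar T^{\top}\bar B$ supplies precisely the cross term; its matching with the $2u^{\top}\bar B^{\top}\Lambda_{ST}^{-1}v$ inside $(\beta z_{c}+\dot{z}_{c})^{\top}\Lambda_{ST}^{-1}v$ hinges on $\Lambda_{ST}^{-1}$ being symmetric, i.e.\ on having taken $T$ symmetric, which is exactly the standing assumption for the error bound. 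A secondary bookkeeping concern is keeping the two coordinate normalizations $\bar S=\Lambda_{ST}$ and $\bar T=\Lambda_{ST}^{-1}$ straight, since they enter the off-diagonal blocks $\bar A_{o}^{\top}\Lambda_{ST}$ and $\bar A_{c}^{\top}\Lambda_{ST}^{-1}$ that ultimately assemble the $(\alpha z_{o}+\dot z_{o})$ and $(\beta z_{c}+\dot z_{c})$ factors.
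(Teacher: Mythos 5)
Your proposal is correct and follows essentially the same argument as the paper: the paper's multiplication of \eqref{LMIobs} and \eqref{LMIcontr} by the test vectors $\xi_{o}=[(W_{e}z_{o})^{\top}\ (W_{e}v)^{\top}]^{\top}$ and $\xi_{c}=[(W_{e}z_{c})^{\top}\ (W_{e}v)^{\top}\ u^{\top}]^{\top}$ is exactly your congruence-plus-evaluation at $(z_{o},v)$ and $(z_{c},v,u)$, with the same identifications $\bar S=\Lambda_{ST}$, $\bar T=\Lambda_{ST}^{-1}$, the same use of the error dynamics to collapse the off-diagonal terms into $(\alpha z_{o}+\dot z_{o})$ and $(\beta z_{c}+\dot z_{c})$, and the same reliance on the full three-block controllability LMI to absorb the $u$--$v$ cross terms.
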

%%%%%%%%%%%%%%%%%%%%%%%%%%%%
\begin{proof}
 Note that
 \begin{equation}
  \dot{\mathcal{S}}= 2z^{\top}_{o}\bar{Q}\dot{z}_{o} + 2\sigma_{n}^{2}z^{\top}_{c}\bar{P}\dot{z}_{c}. \label{dS1}
 \end{equation} 
 Define the vectors
 \begin{equation}
  \begin{array}{rl}
   \xi_{o}:=
\begin{bmatrix}
             W_{e}z_{o} \\ W_{e}v
            \end{bmatrix},  &  \xi_{c}:=\begin{bmatrix}
             W_{e}z_{c} \\ W_{e}v \\u
            \end{bmatrix}.
  \end{array} 
 \end{equation} 
 Multiply LMI \eqref{LMIobs} by $\xi_{o}^{\top}$ from the left and by $\xi_{o}$ from the right, yielding
 \begin{equation}
\begin{array}{r}
   2\left[v^{\top}-z_{o}^{\top}(\alpha I_{n}+\bar A^{\top})\right]\Lambda_{ST}v \hfill
   \\[0.15cm]+z_{o}^{\top}\left[ W_{e}^{\top}X_{o}W_{e}z_{o}+2\bar Qv\right]\geq 0\\[0.15cm]
   \Longleftrightarrow-2(\dot z_{o}+\alpha z_{o})^{\top}\Lambda_{ST}v \hfill 
   \\[0.15cm]+z_{o}^{\top}\left[ W_{e}^{\top}X_{o}W_{e}z_{o}+2\bar Qv\right]\geq 0 \\[0.15cm]
   \Longleftrightarrow-2(\dot z_{o}+\alpha z_{o})^{\top}\Lambda_{ST}v - z_{o}^{\top}\bar{C}^{\top}\bar{C} z_{o}\hfill
   \\[0.15cm]+2{z}_{o}^{\top}\bar{Q}\left[v-\bar{A}{z}_{o}\right]\geq0 \\[0.15cm]
   \Longleftrightarrow-2(\dot z_{o}+\alpha z_{o})^{\top}\Lambda_{ST}v-\lvert y - y_{r} \rvert^{2}-2 z_{o}^{\top}\bar{Q}\dot{{z}}_{o}\geq0,
\end{array}\label{dQ1}
 \end{equation}
 where we used the facts
 \begin{equation}
  \begin{array}{rcl}
  \dot{{z}}_{o}&=& \bar{A}{z}_{o}- v \\
\bar{C}{z}_{o}&=&y-y_{r}.
\end{array}
 \end{equation} 
 Note that \eqref{dQ1} implies that
 \begin{equation}
   2 z_{o}^{\top}\bar{Q}\dot{{z}}_{o} \leq -2(\dot z_{o}+\alpha z_{o})^{\top}\Lambda_{ST}v-\lvert y - y_{r} \rvert^{2}. \label{dQ2}
 \end{equation} 
 Now, multiply LMI \eqref{LMIcontr} by $\xi_{c}^{\top}$ from the left and by $\xi_{c}$ from the right to obtain
 \begin{equation}
  \begin{array}{r}
   -2z_{c}^{\top}\bar P \left[ \bar A z_{c} + 2 \bar B u + v \right] + 4 \lvert u \rvert^{2} \hfill \\[0.15cm]+2\left[ z_{c}^{\top}(\beta I_{n} + \bar A^{\top}) + 
   v^{\top} +2u^{\top}\bar B^{\top}\right]\Lambda_{ST}^{-1}v \geq 0 \\[0.15cm]
   \Longleftrightarrow-2{z}_{c}^{\top}\left( \bar{A} {z}_{c} + 2 \bar{B}u + v \right) + 4 \lvert u \rvert^{2} \hfill \\[0.15cm]
   +2\left( \dot{z}_{c}+\beta z_{c}\right)^{\top}\Lambda_{ST}^{-1}v \geq 0 \\[0.15cm]
   \Longleftrightarrow  4 \lvert u \rvert^{2} + 2\left( \dot{z}_{c}+\beta z_{c}\right)^{\top}\Lambda_{ST}^{-1}v \geq 2 z_{c}^{\top}\bar{P}\dot{{z}}_{c},
  \end{array} \label{dP1}
 \end{equation} 
 where we used that
 \begin{equation}
  \dot{{z}}_{c}=\bar{A} {z}_{c} + 2 \bar{B}u + v.
 \end{equation} 
%  Inequality \eqref{dP1} implies
%  \begin{equation}
%    2\breve z_{c}^{\top}\dot{\breve{z}}_{c} \leq 4 \lvert u \rvert^{2} + 2\left( \dot{z}_{c}+\beta z_{c}\right)^{\top}\Lambda_{ST}^{-1}v \label{dP2}
%  \end{equation} 
 The proof is completed by substituting \eqref{dQ2} and \eqref{dP1} in \eqref{dS1} to obtain \eqref{dS}.
\end{proof}
%%%%%%%%%%%%%%%%%%%%%%%%%%%%%%%
%%%%%%%%%%%%%%%%%%%%%%%%%%%%%%%
%%%%%%%%%%%%%%%%%%%%%%%%%%%%%%%
In order to establish the error bound, we propose a particular 
selection of the signal $v(t)$ that allow us to compare the behavior of systems \eqref{sysbal} and \eqref{sysrred}.\\
%%%%%%%%%%%%%%%%%%%%%%%%%%%%%
%%%%%%%%%%%%%%%%%%%%%%%%%%%%%
%%%%%%%%%%%%%%%%%%%%%%%%%%%%%
\begin{lemma}\label{lemma1}
 Consider $\ell=1$. Assume that systems \eqref{sysrred} and \eqref{sysr} are initially at rest.
 Consider the partition $x_{r}=[x_{r_{1}}^{\top} \ x_{r_{2}}^{\top}]^{\top}$, with $x_{r_{1}}\in\rea^{n-1}$ and $x_{r_{2}}\in\rea$. Choose

\begin{equation}
  v(t)=
    -\begin{bmatrix} \mathbf{0}_{n-1} \\[0.1cm]
\bar A_{21}x_{r_{1}}(t)
+\bar{B}_{2}u(t)
\end{bmatrix}. \label{v}
\end{equation} 
Then, $\hat y(t)= y_{r}(t)$, and $x_{r_{2}}(t)=0$ for every $t\geq 0$.
\end{lemma}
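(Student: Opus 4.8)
The plan is to exploit the block structure of the augmented system $\varSigma_r$ and show that the particular choice of $v(t)$ is engineered precisely to cancel the coupling into the truncated coordinate, so that this coordinate stays at zero and the surviving dynamics coincide with $\widehat\varSigma$. First I would write $\varSigma_r$ from \eqref{sysrred} in the partitioned coordinates $x_r=[x_{r_1}^{\top}\ x_{r_2}^{\top}]^{\top}$ induced by \eqref{split}, namely
\begin{equation*}
\begin{array}{rcl}
\dot{x}_{r_1} &=& \bar{A}_{11}x_{r_1}+\bar{A}_{12}x_{r_2}+\bar{B}_{1}u + v_{1},\\[0.1cm]
\dot{x}_{r_2} &=& \bar{A}_{21}x_{r_1}+\bar{A}_{22}x_{r_2}+\bar{B}_{2}u + v_{2},
\end{array}
\end{equation*}
where $v=[v_{1}^{\top}\ v_{2}]^{\top}$. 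By \eqref{v} we have $v_{1}=\mathbf{0}_{n-1}$ and $v_{2}=-(\bar{A}_{21}x_{r_1}+\bar{B}_{2}u)$, using that $\ell=1$ so that $x_{r_2}\in\rea$ and $\bar{A}_{22}$ is scalar.

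The key observation is that substituting $v_{2}$ into the second equation cancels the coupling term $\bar{A}_{21}x_{r_1}+\bar{B}_{2}u$, leaving the homogeneous scalar equation $\dot{x}_{r_2}=\bar{A}_{22}x_{r_2}$. Since $\varSigma_r$ is initially at rest we have $x_{r_2}(0)=0$, and by uniqueness of solutions of this linear ODE I conclude $x_{r_2}(t)=0$ for all $t\geq 0$, which is the second claim. With $x_{r_2}\equiv 0$ and $v_{1}=\mathbf{0}_{n-1}$ in hand, the first equation collapses to $\dot{x}_{r_1}=\bar{A}_{11}x_{r_1}+\bar{B}_{1}u=\widehat A x_{r_1}+\widehat B u$, which is exactly the dynamics of $\widehat\varSigma$ in \eqref{sysr}.

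To finish, I would note that both $x_{r_1}$ and $\hat{x}$ start from zero and are driven by the same input $u$, so uniqueness again gives $x_{r_1}(t)=\hat{x}(t)$ for all $t\geq 0$. Then I would compute the output of $\varSigma_r$ using the column partition $\bar{C}=[\bar{C}_{1}\ \bar{C}_{2}]$: we get $y_{r}=\bar{C}x_{r}=\bar{C}_{1}x_{r_1}+\bar{C}_{2}x_{r_2}=\bar{C}_{1}\hat{x}=\widehat C\hat{x}=\hat{y}$, which establishes the first claim.

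The argument is essentially bookkeeping of the block partition together with two applications of uniqueness for linear ODEs, so I do not expect a genuinely hard step. The only point that needs care is that $v$ is defined in feedback form through $x_{r_1}$ and $u$; hence I must check that the closed system $\varSigma_r$ is well posed and that the substitution above is legitimate. This is automatic here, because $v_{2}$ depends only on $x_{r_1}$ and $u$ and not on $x_{r_2}$ or any derivative, so no algebraic loop appears and the decoupled equation for $x_{r_2}$ is valid.
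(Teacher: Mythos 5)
Your proof is correct and follows essentially the same route as the paper: substitute the chosen $v$ into \eqref{sysrred}, observe that the truncated coordinate obeys the homogeneous equation $\dot{x}_{r_2}=\bar{A}_{22}x_{r_2}$ with zero initial condition (hence vanishes identically), conclude $x_{r_1}\equiv\hat{x}$ by uniqueness, and read off $y_r=\hat{y}$. Your added remark on well-posedness of the feedback definition of $v$ is a harmless refinement of the same argument.
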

%%%%%%%%%%%%%%%%%%%%%%
\begin{proof}
 To establish the proof replace \eqref{v} in \eqref{sysrred} to obtain
\begin{equation}
 \begin{array}{rcl}
    \dot{x}_{r_{1}}&=&\bar A_{11}x_{r_{1}}+\bar A_{12}x_{r_{2}}+\bar B_{1}u  \\
  \dot{x}_{r_{2}}&=&\bar A_{22}x_{r_{2}} 
 \end{array}\label{x2r}
\end{equation} 
Since $x_{r}(0)=\mathbf{0}_{n}$, from \eqref{x2r} we have the following chain of implications
\begin{equation}
 \begin{array}{rcl}
  \dot{x}_{r_{2}}=0 \; \; \forall \; \; t\geq 0 &\Longrightarrow& x_{r_{2}}(t) = 0 \; \; \forall \; \; t\geq 0 \\[0.15cm]
  &\Longrightarrow& \dot{x}_{r_{1}}=\bar A_{11}x_{r_{1}}+\bar B_{1}u. 
 \end{array}\label{x1r}
\end{equation} 
Since $\hat{x}(0)=\mathbf{0}_{n-1}$, the last expression of \eqref{x1r} implies that $\hat{x}(t)=x_{r_{1}}(t)$ for all $t\geq 0$. Hence,
\begin{equation}
  y_{r}= \bar C_{1}x_{r_{1}}
  = \widehat C\hat x
  = \hat{y}.
\end{equation} 
\end{proof}
%%%%%%%%%%%%%%%%%%%%%%%%%%%
%%%%%%%%%%%%%%%%%%%%%%%%%%
%%%%%%%%%%%%%%%%%%%%%%%%%%
Using the results of Proposition \ref{prop:S} and Lemma \ref{lemma1}, the following Lemma establishes an error bound for the case $\ell=1$, that is, when only one state is truncated.\\
%%%%%%%%%%%%%%%%%%%%%%%%%%%
%%%%%%%%%%%%%%%%%%%%%%%%%%
%%%%%%%%%%%%%%%%%%%%%%%%%%
\begin{lemma}\label{lemma2}
 Consider the balanced system \eqref{sysbal} with extended observability Gramian $(\bar Q, \Lambda_{ST}, \alpha)$, and inverse extended controllability 
 Gramian $(\bar P, \Lambda_{ST}^{-1}, \beta)$, where $\alpha=\beta$ and $\ell=1$.
 Assume that systems \eqref{bsys}, \eqref{sysr} and \eqref{sysrred} are initially at rest and select $v$ as in \eqref{v}. Then,
\begin{equation}
 \lVert \varSigma-\widehat{\varSigma} \rVert_{\infty}\leq 2\sigma_{n}.
\end{equation}
\end{lemma}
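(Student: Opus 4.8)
The plan is to read the error bound off the dissipation inequality \eqref{dS} of Proposition \ref{prop:S}, specialized to the signal $v$ of Lemma \ref{lemma1}. Since the balanced realization $\bar\varSigma$ is obtained from $\varSigma$ by the coordinate change \eqref{tre}, the two share the same transfer function, so $y=\bar y$ as signals for the same input from rest; Lemma \ref{lemma1} moreover gives $\hat y=y_r$, whence the error output satisfies $y-\hat y=y-y_r$, which is exactly the quantity penalized in \eqref{dS}. Because the $\mathcal{H}_\infty$ norm of an asymptotically stable system coincides with its $\mathcal{L}_2$-induced gain from rest, it suffices to establish $\lVert y-y_r\rVert_2\le 2\sigma_n\lVert u\rVert_2$ for every input $u$.

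First I would integrate \eqref{dS} over $[0,\infty)$. As the systems start at rest, $z_o(0)=z_c(0)=0$, so $\mathcal{S}(0)=0$; and $\mathcal{S}\ge 0$ throughout, since $\bar Q=W_e^\top QW_e>0$ and $\bar P=W_e^\top PW_e>0$. Using $\mathcal{S}(T)\ge 0$ and letting $T\to\infty$ then yields
\begin{multline*}
0 \le 4\sigma_n^2\lVert u\rVert_2^2 - \lVert y-y_r\rVert_2^2 \\
+ 2\int_0^\infty\Big[\sigma_n^2(\beta z_c+\dot z_c)^\top\Lambda_{ST}^{-1} - (\alpha z_o+\dot z_o)^\top\Lambda_{ST}\Big]v\,dt,
\end{multline*}
so the entire argument reduces to showing that the remaining cross term vanishes.

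The crucial observation — and the place where the hypotheses $\ell=1$ and $\alpha=\beta$ enter — is that the signal \eqref{v} is supported only in the single truncated coordinate, i.e. only its $n$-th entry is nonzero. Hence the products against $v$ retain only the $n$-th entries of the two bracketed vectors, weighted by the $n$-th diagonal entries $\sigma_n$ of $\Lambda_{ST}$ and $\sigma_n^{-1}$ of $\Lambda_{ST}^{-1}$, leaving an integrand proportional to $\sigma_n\big[\beta(z_c)_n+(\dot z_c)_n-\alpha(z_o)_n-(\dot z_o)_n\big]$. By Lemma \ref{lemma1} the truncated state obeys $x_{r_2}\equiv 0$, so the $n$-th entries of $z_o=\bar x-x_r$ and $z_c=\bar x+x_r$ both equal $\bar x_n$, and likewise $(\dot z_o)_n=(\dot z_c)_n=\dot{\bar x}_n$. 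With $\alpha=\beta$ the bracket collapses to $(\beta-\alpha)\bar x_n=0$, so the cross term is identically zero.

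Once the cross term is eliminated, the displayed inequality reads $\lVert y-y_r\rVert_2^2\le 4\sigma_n^2\lVert u\rVert_2^2$, i.e. $\lVert y-\hat y\rVert_2\le 2\sigma_n\lVert u\rVert_2$; taking the supremum over normalized inputs gives $\lVert\varSigma-\widehat\varSigma\rVert_\infty\le 2\sigma_n$. I expect the cross-term cancellation to be the main obstacle: it is where the three ingredients — a single truncated coordinate, the identity $x_{r_2}\equiv 0$ supplied by Lemma \ref{lemma1}, and the matching weights forced by $\alpha=\beta$ — must line up exactly, and where one must verify that $v$ genuinely has no component along the retained coordinates, so that no stray terms survive the integration.
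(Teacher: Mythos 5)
Your proposal is correct and follows essentially the same route as the paper: it specializes the dissipation inequality \eqref{dS} to the signal $v$ of Lemma \ref{lemma1}, cancels the cross term by observing that $v$ lives only in the truncated coordinate where, by $x_{r_{2}}\equiv 0$ and $\alpha=\beta$, the two bracketed contributions coincide (the paper's identity \eqref{dSter}), and then integrates from rest and invokes the equivalence of the $\mathcal{H}_{\infty}$ norm with the induced $\mathcal{L}_{2}$ gain. The only cosmetic difference is that you integrate first and cancel the cross term inside the integral, whereas the paper cancels it pointwise before integrating.
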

%%%%%%%%%%%%%%%%%%%%%%%%%
\begin{proof}
 Define
 \begin{equation}
  v_{2}:=\bar A_{21}x_{r_{1}}(t)
+\bar{B}_{2}u(t).
 \end{equation} 
 Hence, we can rewrite \eqref{v} as follows 
 \begin{equation}
 v= \begin{bmatrix}
   \mathbf{0}_{n-1} \\ v_{2}
  \end{bmatrix}. \label{v2}
 \end{equation} 
On the other hand, from Lemma \ref{lemma1} we have that
\begin{equation}
 \begin{array}{rl}
  x_{r}=\begin{bmatrix}
         \hat{x} \\ 0
        \end{bmatrix},
& y_{r}=\hat{y}.
 \end{array}
\end{equation} 
Therefore, since $\alpha=\beta$, we get
\begin{equation}
 \begin{array}{rcl}
  \left(\alpha z_{o} + \dot{z}_{o}\right)^{\top} \Lambda_{ST} v&=& \sigma_{n}(\alpha \bar x_{2}+\dot{\bar{x}}_{2})v_{2}\\[0.1cm]
  &=& \sigma_{n}^{2}\left( \beta z_{c} + \dot{z}_{c}\right)^{\top} \Lambda_{ST}^{-1} v.
 \end{array}\label{dSter}
\end{equation} 
Now, consider the storage function $\mathcal{S}( z_{o},  z_{c})$, given in \eqref{Sto}. Then, substituting \eqref{dSter} in \eqref{dS},
its derivative along the trajectories reduces to 
\begin{equation}
 \dot{\mathcal{S}}  \leq  4\sigma_{n}^{2}\lvert u \rvert^{2} - \lvert  y - \hat y \rvert^{2}, \label{dS2}
\end{equation} 
where we used \eqref{dSter}. Moreover, 
integrating \eqref{dS2} from $0$ to $\infty$, yields
\begin{equation}
 0\leq 4\sigma_{n}^{2}\lVert u \rVert^{2}_{2}-\lVert y- \hat y\rVert^{2}_{2}
\end{equation} 
which implies 
\begin{equation}
 \lVert y- \hat y\rVert_{2}\leq 2\sigma_{n}\lVert u \rVert_{2}.
\end{equation}
The proof is completed by using the induced $\mathcal{L}_{2}$ norm, see Proposition 5.13 and the table on page 150 of \cite{ANT}.
 \end{proof}
 Now we are in position to present the main result of this paper in terms of the error bound for model reduction for CTLTI systems based on extended balanced truncation.\\
%%%%%%%%%%%%%%%%%%%%%%
%%%%%%%%%%%%%%%%%%%%%%
%%%%%%%%%%%%%%%%%%%%%%
\begin{theorem}
 Consider the balanced system \eqref{sysbal} with extended observability Gramian $(\bar Q, \Lambda_{ST}, \alpha)$, and inverse extended controllability 
 Gramian $(\bar P, \Lambda_{ST}^{-1}, \beta)$, where $\alpha=\beta$ and
 \begin{equation*}
  \Lambda_{ST}=\diag\{\sigma_{1}, \cdots, \sigma_{n} \}.
 \end{equation*} 
 Consider the truncated $k^{th}$ order system \eqref{sysr}. Then,
 the error bound is given by the following inequality
 \begin{equation}
 \lVert \varSigma-\widehat{\varSigma} \rVert_{\infty}\leq 2\sum_{j=k+1}^{n}\sigma_{j}. \label{eb} 
\end{equation} 
\end{theorem}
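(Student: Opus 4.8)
The plan is to reduce the multi-state truncation to a chain of single-state truncations, each controlled by Lemma \ref{lemma2}, and then to accumulate the individual errors via the triangle inequality for the induced $\mathcal{L}_{2}$ (i.e.\ $\mathcal{L}_{\infty}$) norm. Since $W_{e}$ is a similarity transformation that leaves the transfer function invariant, $\lVert \varSigma - \widehat{\varSigma}\rVert_{\infty} = \lVert \bar{\varSigma} - \widehat{\varSigma}\rVert_{\infty}$, so I would work entirely in the balanced coordinates of \eqref{sysbal}.

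Concretely, for $r=k,\dots,n$ let $\varSigma^{(r)}$ denote the realization obtained from $\bar{\varSigma}$ by retaining only its first $r$ balanced states, so that $\varSigma^{(n)}=\bar{\varSigma}$ and $\varSigma^{(k)}=\widehat{\varSigma}$. Each $\varSigma^{(r)}$ arises from $\varSigma^{(r+1)}$ by truncating the single state associated with $\sigma_{r+1}$. The triangle inequality then gives
\begin{equation*}
\lVert \varSigma - \widehat{\varSigma}\rVert_{\infty} \le \sum_{r=k}^{n-1} \lVert \varSigma^{(r+1)} - \varSigma^{(r)}\rVert_{\infty},
\end{equation*}
and if every one-step error satisfies $\lVert \varSigma^{(r+1)} - \varSigma^{(r)}\rVert_{\infty} \le 2\sigma_{r+1}$, the sum telescopes to exactly \eqref{eb}. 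To apply Lemma \ref{lemma2} to the step $r+1\to r$, however, I must first certify that $\varSigma^{(r+1)}$ is \emph{itself} extended balanced, i.e.\ that its leading data admit an extended observability Gramian $(\bar{Q}^{(r+1)},\Lambda_{r+1},\alpha)$ and an inverse extended controllability Gramian $(\bar{P}^{(r+1)},\Lambda_{r+1}^{-1},\beta)$ sharing the same $\alpha=\beta$, with $\Lambda_{r+1}=\diag\{\sigma_{1},\dots,\sigma_{r+1}\}$.

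This preservation of the extended balanced structure under truncation is the step I expect to be the main obstacle. The natural candidates for the reduced Gramians are the leading principal blocks $\bar{Q}_{11},\bar{P}_{11}$ of the full Gramians. When I extract the corresponding principal submatrix of the extended observability LMI \eqref{LMIobs}, the diagonality of $S=\Lambda_{ST}$ is what makes the off-diagonal structure cooperate: because $\Lambda_{ST}$ is block diagonal, the coupling block $\bar{Q}-(\alpha I+\bar A)^{\top}\Lambda_{ST}$ and the corner block $2\Lambda_{ST}$ restrict cleanly to $\bar{Q}_{11}-(\alpha I+\bar A_{11})^{\top}\Lambda_{r+1}$ and $2\Lambda_{r+1}$, which are precisely the off-diagonal and corner blocks the truncated LMI demands. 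The delicate point is the top-left block: the $(1,1)$ block of $\bar{X}_{o}$ equals
\begin{equation*}
-\bar{Q}_{11}\bar A_{11}-\bar A_{11}^{\top}\bar{Q}_{11}-\bar{C}_{1}^{\top}\bar{C}_{1}-\bigl(\bar{Q}_{12}\bar A_{21}+\bar A_{21}^{\top}\bar{Q}_{12}^{\top}\bigr),
\end{equation*}
so it differs from the quantity required by the truncated system by a cross-coupling term built from the off-diagonal Gramian block $\bar{Q}_{12}$. I would therefore have to show that this coupling term enters with the right sign, or absorb it by re-selecting the free symmetric matrix $\Gamma_{o}$ of Theorem \ref{tm:obs} (and, symmetrically, $\Gamma_{c}$ of Theorem \ref{tm:contr} for the controllability LMI \eqref{LMIcontr}, where the full input block of dimension $m$ is left untouched by the state truncation).

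Once the intermediate systems $\varSigma^{(r+1)}$ are certified extended balanced with a common $\alpha=\beta$, the per-step bound $\lVert \varSigma^{(r+1)}-\varSigma^{(r)}\rVert_{\infty}\le 2\sigma_{r+1}$ follows verbatim from Lemma \ref{lemma2} (with its choice of $v$ as in \eqref{v} and the initial-rest hypothesis), and the telescoping sum above closes the argument. The only genuinely new work beyond Lemma \ref{lemma2} is thus the structural induction, and its crux is controlling the off-diagonal Gramian coupling described above.
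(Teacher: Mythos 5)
Your proposal follows essentially the same route as the paper, whose entire proof of this theorem is the single sentence ``apply iteratively Lemma \ref{lemma2}'' --- i.e., precisely your reduction to a chain of one-state truncations combined with the triangle inequality for the induced norm. The obstacle you single out --- certifying that each intermediate truncated system is again extended balanced with Gramian candidates $\bar Q_{11}$, $\bar P_{11}$, the truncated $\Lambda_{ST}$, and the same $\alpha=\beta$, despite the cross-coupling terms involving $\bar Q_{12}$ and $\bar P_{12}$ in the $(1,1)$ block of the restricted LMIs --- is genuine, but it is left entirely implicit by the paper's one-line proof, so your treatment is, if anything, more careful than the published one.
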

%%%%%%%%%%%%%%%%%%%%%
\begin{proof}
 To establish the proof apply iteratively Lemma \ref{lemma2}.
\end{proof}
%%%%%%%%%%%%%%%%%%%%%%
\begin{remark}
 If the matrices $\Gamma_{o}$ and $\Gamma_{c}$ are chosen as zero and $\alpha=\beta$, then $S=\frac{1}{\alpha} Q$ and $T=\frac{1}{\alpha} P$. Hence, $Q\breve{P}=ST^{-1}$, and $\Lambda_{QP}=\Lambda_{ST}$. Accordingly, the error bound obtained via extended balancing coincides with the error bound obtained from the generalized balancing approach. Moreover, the reduced-order model obtained from both methods is the same. 
\end{remark}
%%%%%%%%%%%%%%%%%%%%%%
%%%%%%%%%%%%%%%%%%%%%%
Similar to the discrete-time results reported in \cite{SAND08} and \cite{SAND10}, the error bound \eqref{eb} is obtained by proposing a storage function 
and using dissipativity arguments, as in \cite{W02}. This procedure contrasts to the traditional analysis using transfer functions.
%%%%%%%%%%%%%%%%%%%%%%%%%
\section{Balancing of CTLTI PH systems}
\label{sec:bfphs}
%%%%%%%%%%%%%%%%%%%%%%%

From now onwards, we focus on the study of PH systems. These systems have been proved to be suitable to capture physical phenomena in different domains while preserving conservation laws \cite{GEObook}, \cite{VAN}. 
In this framework, it is possible to represent large scale networks of complex physical systems and, at the same time, underscore the roles of the energy, the interconnection pattern, and the dissipation in the behavior of such systems. 
Moreover, the passivity property of these systems can be straightforwardly proved by selecting the Hamiltonian function as a
storage function. Thus, given the possible physical interpretation of the PH models and their geometrical properties, this framework is appealing from both points of view: the theoretical and the practical one.
Therefore, preserving the PH structure for the reduced order model is interesting for analysis purposes and might be useful to give an interpretation of the behavior of the reduced order system.
In this section, we aim to solve the model reduction problem of CTLTI PH systems while preserving the PH
structure for the reduced order system. Furthermore, in some cases, not only the PH structure is preserved, but more particular structures which permit to provide a physical interpretation of the reduced order model.
%%%%%%%%%%%%%%%%%%%%%%%%
\subsection{CTLTI PH systems}\label{sec:phsys}
%%%%%%%%%%%%%%%%%%%%%%%
The representation of a CTLTI PH system is given by
\begin{equation}
\varSigma_{H}:\left\lbrace\begin{array}{rcl}
\dot{x}&=& (J-R)H x+Bu \\[.07cm] y&=& B^{\top}H x \\[.07cm] \mathcal{H}(x)&=&\frac{1}{2}x^{\top}H x        
       \end{array}\right. \label{sys} 
\end{equation}
where $x\in\rea^{n}$ is the state vector, $u,y \in \rea^{m}$ are the input and output vectors, respectively, $\mathcal{H}(x)$ represents the Hamiltonian of the system, with $H = H^{\top}>0$; and $R=R^{\top}\geq 0$, $J=-J^{\top}$ represent the dissipation and the interconnection matrix, respectively. 
In order to simplify notation, we define
$F:=J-R$.\\[0.2cm]
The objective of this work is twofold: on the one hand, we aim to balance system \eqref{sys} and obtain a lower order model. On the other hand, we want the reduced model to have a PH structure
because of the interpretation and the interconnection properties of this kind of systems.
Towards this end, we assume that system \eqref{sys} is asymptotically stable and we look for an invertible linear transformation $W$ that balances the system. Such transformation is given by $W=W_{g}$ in the generalized case, while in the extended case we have $W=W_{e}$.
Then, we write the dynamics of the balanced system as follows
\begin{equation*}\bar{\Lambda}_{H}:\left\lbrace
 \begin{array}{rcl}
   \dot{\bar x}
&=& \bar F \bar H\bar x + \bar B u
 \\\bar y&=&
       \bar B^{\top}\bar H \bar x,
 \end{array}\right. 
\end{equation*} 
where
\begin{equation*}
\begin{array}{rcl}
 \bar F:=W^{-1} F W^{-\top}, & \bar H:=W^{\top} H W, & \bar B:=W^{-1}B. 
\end{array}      
\end{equation*} 
Hence, if we split $\bar x$ as in \eqref{split}, the balanced system can be expressed as
\begin{equation}\bar{\varSigma}_{H}:\left\lbrace
 \begin{array}{rcl}
  \begin{bmatrix}
   \dot{\bar x}_{1} \\ \dot{\bar x}_{2}
  \end{bmatrix}
&=& \begin{bmatrix}
     \bar F_{11} & \bar F_{12} \\ \bar F_{21} & \bar F_{22}
    \end{bmatrix}
\begin{bmatrix}
\bar H_{11} & \bar H_{12} \\ \bar H_{12}^{\top} & \bar H_{22}
\end{bmatrix}
\begin{bmatrix}
\bar x_{1} \\ \bar x_{2}
\end{bmatrix}+
\begin{bmatrix}
\bar B_{1} \\ \bar B_{2}
\end{bmatrix}
u \\[.3cm]\bar y&=&\begin{bmatrix}
       \bar B_{1}^{\top} & \bar B_{2}^{\top}
       \end{bmatrix}
\begin{bmatrix}
\bar H_{11} & \bar H_{12} \\ \bar H_{12}^{\top} & \bar H_{22}
\end{bmatrix}\begin{bmatrix}
\bar x_{1} \\ \bar x_{2}
\end{bmatrix},
 \end{array}\right.\label{bsys} 
\end{equation} 
with
\begin{equation*}
 \begin{array}{lll}
 \bar F_{11},\bar H_{11}\in\rea^{k\times k}, & \bar F_{22}, \bar H_{22}\in\rea ^{\ell\times \ell}, &
 \bar F_{12}, \bar H_{12}\in\rea^{k\times \ell}, \\[0.1cm]  \bar B_{1}\in \rea^{k \times m}, &
 \bar F_{21}\in\rea^{\ell\times k}, & \bar B_{2}\in \rea^{\ell \times m}.
 \end{array}
\end{equation*}

\textbf{Problem formulation for PH systems.}
Given the system \eqref{sys}, find an invertible linear transformation $W$, that performs the balancing of the system and at the same time satisfies 

\begin{equation}
 \bar H_{12}=\mathbf{0}_{k \times n-k}. \label{conLam}
\end{equation} 

Note that, if \eqref{conLam} holds, the truncation leads to the following reduced order system

\begin{equation}
 \widehat\varSigma_{H}:\left\lbrace\begin{array}{rcl}
\dot{\hat x}&=& \bar F_{11}\bar H_{11} \hat x+\bar B_{1}u \\[.07cm] \hat y&=&\bar B_{1}^{\top}\bar H_{11}\hat x \\[.07cm] \hat{\mathcal{H}}(\hat x)&=&\frac{1}{2}\hat x^{\top}\bar H_{11} \hat x,        
       \end{array}\right. \label{phsysr} 
\end{equation}

which is another CTLTI PH system, with $\hat x = \bar x_{1}$. Therefore, it follows that one solution to the problem of model reduction with PH structure preservation takes place when the Hamiltonian matrix of the balanced system, $\bar{H}$, is diagonal. 
In such case, our problem is reduced to the simultaneous diagonalization of three matrices, namely, $(Q,P,H)$ or $(S,T,H)$.\\[.2cm]
%%%%%%%%%%%%%%%%
\begin{remark}
 The complete diagonalization of $H$ is not necessary. In fact, a block diagonalization that ensures \eqref{conLam} is enough to preserve the PH structure. 
 Nevertheless, if $H$ is not a diagonal matrix, then it is necessary to know the dimension of the part of the state to be truncated.\\
\end{remark}
%%%%%%%%%%%%%%%
The subsequent sections of this paper are devoted to the identification of a transformation $W$ that balances the system and ensures that \eqref{conLam} is satisfied. 
%%%%%%%%%%%%%%%%%%%%%%%%%%%%%%%%%
%%%%%%%%%%%%%%%%%%%%%%%%%%%%%%%%%
%%%%%%%%%%%%%%%%%%%%%%%%%%%%%%%%%
\subsection{Generalized balancing of CTLTI PH systems}\label{sec:gbph}
%%%%%%%%%%%%%%%%%%%%%%%%%%%%%%%%%
%%%%%%%%%%%%%%%%%%%%%%%%%%%%%%%%%
In this subsection, we study the generalized balancing method for CTLTI PH systems which is the starting point of extended balancing of CTLTI PH studied in Section \ref{sec:ebph}. 
Below, we provide sufficient conditions to ensure the existence of a transformation $W_{g}$ that complies with the requirements established in Section \ref{sec:phsys}. To this end, we revisit
the following theorem which establishes necessary and sufficient conditions for the existence of a transformation that diagonalizes simultaneously three matrices when
at least one of them has definite sign.\\
\begin{theorem}[\cite{NOV}]\label{th:diag}
 Let $L, M, N$ be symmetric matrices. In the case of at least one fixed-sign quadratic form (e.g., $M$ positive definite), 
 the condition 
  \begin{equation}
  LM^{-1}N=NM^{-1}L \label{3diag}
 \end{equation}
 is necessary and sufficient
 for the existence of a linear invertible congruent transformation $W$ that diagonalizes simultaneously $L, M$ and $N$.
\end{theorem}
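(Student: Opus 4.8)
The plan is to exploit the positive definiteness of $M$ to normalize it first, and then reduce the statement to the classical fact that two commuting symmetric matrices admit a common orthonormal eigenbasis. Throughout, by ``simultaneous diagonalization by a congruent transformation $W$'' I mean that $W^{\top}LW$, $W^{\top}MW$ and $W^{\top}NW$ are all diagonal, which matches the congruence convention $\bar{H}=W^{\top}HW$ used in this section.

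\emph{Necessity.} First I would dispatch the easy direction. Suppose an invertible $W$ renders $W^{\top}LW=D_{L}$, $W^{\top}MW=D_{M}$ and $W^{\top}NW=D_{N}$ diagonal. Writing $L=W^{-\top}D_{L}W^{-1}$ and analogously for $M$ and $N$, and using $M^{-1}=WD_{M}^{-1}W^{\top}$, a direct substitution yields
\begin{equation*}
 LM^{-1}N=W^{-\top}D_{L}D_{M}^{-1}D_{N}W^{-1}, \qquad NM^{-1}L=W^{-\top}D_{N}D_{M}^{-1}D_{L}W^{-1}.
\end{equation*}
Since diagonal matrices commute, $D_{L}D_{M}^{-1}D_{N}=D_{N}D_{M}^{-1}D_{L}$, and hence \eqref{3diag} follows at once.

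\emph{Sufficiency.} For the converse I assume $M>0$ and that \eqref{3diag} holds. Because $M$ is symmetric positive definite it admits a symmetric positive definite square root $M^{1/2}$; I set $W_{1}:=M^{-1/2}$, which is symmetric, invertible, and satisfies $W_{1}^{\top}MW_{1}=I$. Defining the symmetric matrices $\tilde{L}:=M^{-1/2}LM^{-1/2}$ and $\tilde{N}:=M^{-1/2}NM^{-1/2}$, the key observation is that \eqref{3diag} is precisely the statement that $\tilde{L}$ and $\tilde{N}$ commute: inserting $M^{-1}=M^{-1/2}M^{-1/2}$ gives
\begin{equation*}
 \tilde{L}\tilde{N}=M^{-1/2}\left(LM^{-1}N\right)M^{-1/2}, \qquad \tilde{N}\tilde{L}=M^{-1/2}\left(NM^{-1}L\right)M^{-1/2},
\end{equation*}
so $\tilde{L}\tilde{N}=\tilde{N}\tilde{L}$ if and only if $LM^{-1}N=NM^{-1}L$.

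\emph{Completing the argument.} Two commuting symmetric matrices share a common orthonormal eigenbasis, so there is an orthogonal $U$ (that is, $UU^{\top}=I$, see the Notation section) with $U^{\top}\tilde{L}U$ and $U^{\top}\tilde{N}U$ both diagonal. Setting $W:=M^{-1/2}U$, which is invertible, I would then verify $W^{\top}MW=U^{\top}(W_{1}^{\top}MW_{1})U=U^{\top}IU=I$, while $W^{\top}LW=U^{\top}\tilde{L}U$ and $W^{\top}NW=U^{\top}\tilde{N}U$ are diagonal by construction, exhibiting the desired simultaneous diagonalization. The only delicate point is the common-eigenbasis step when $\tilde{L}$ has repeated eigenvalues: there one restricts $\tilde{N}$ to each eigenspace of $\tilde{L}$, where it acts as a symmetric operator, diagonalizes it internally via the spectral theorem, and patches the orthonormal bases together. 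This is the main obstacle of the proof, though it is a standard consequence of the spectral theorem rather than a genuinely hard step; the positivity of $M$ is what makes the preliminary normalization $W_{1}^{\top}MW_{1}=I$ possible and thereby reduces the three-matrix problem to this two-matrix commuting case.
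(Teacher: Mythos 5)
Your proof is correct, but there is nothing in the paper to compare it against: the paper states Theorem \ref{th:diag} as a quoted result and explicitly defers its proof to \cite{NOV} and \cite{CAU}, so you have supplied an argument where the paper gives none. What you wrote is the standard one (cf.\ \cite{HORN}, Chapter 4): congruence by $M^{-1/2}$ normalizes $M$ to $I$ and converts condition \eqref{3diag} into commutativity of the symmetric matrices $M^{-1/2}LM^{-1/2}$ and $M^{-1/2}NM^{-1/2}$; the spectral theorem for commuting symmetric matrices then yields the orthogonal $U$, and $W=M^{-1/2}U$ exhibits the simultaneous diagonalization, while necessity follows by direct substitution since diagonal matrices commute. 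Two small caveats. First, your sufficiency argument treats only the case in which the inverted matrix $M$ is the positive definite one; the theorem as stated permits any of $L,M,N$ to carry the fixed sign (with $M$ merely invertible so that \eqref{3diag} makes sense), and that variant needs a modified argument. This is harmless for the paper's purposes, since both applications of the theorem — the triplet $(H,\breve P,Q)$ in \eqref{commute} and the triplet $(H,T,S)$ in the RLC example of Section \ref{sec:ex} — have the inverted matrix positive definite, matching your hypothesis. Second, the negative definite case of $M$ reduces immediately to yours upon replacing $M$ by $-M$, which alters neither \eqref{3diag} nor simultaneous diagonalizability; a sentence to that effect would make your proof cover the ``fixed-sign'' wording for $M$ in full.
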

For the proof and further details about Theorem \ref{th:diag}, we refer the reader to \cite{NOV} and \cite{CAU}. For a thorough exposition on simultaneously diagonalizable matrices, we refer
the reader to \cite{HORN}, Chapter 4.\\[0.2cm]
In generalized balancing of CTLTI PH systems, the condition \eqref{3diag} takes the form
\begin{equation}
H \breve P^{-1} Q = Q \breve P^{-1}H.\label{commute}
\end{equation} 
Accordingly, we look for $Q$ and $\breve P$ verifying \eqref{go} and \eqref{gc}, respectively, such that \eqref{commute} holds. A trivial solution to this problem takes place when
$Q$ or $\breve P$ coincides with the (scaled) Hamiltonian matrix $H$ or its inverse. This idea has been studied in \cite{F08} and \cite{KS18}, among other works; and for the sake of completeness, the proposition below identifies a class of CTLTI PH systems for which the (scaled) Hamiltonian matrix, or its inverse, solves the inequalities \eqref{go} and \eqref{gc}.\\
%%%%%%%%%%%%%%%%%%%%%%%%%%%%%%%%%
%%%%%%%%%%%%%%%%%%%%%%%%%%%%%%%%%
%%%%%%%%%%%%%%%%%%%%%%%%%%%%%%%%%
\begin{proposition}\label{proplam}\em
Consider $\delta\in \rea_{>0}$. Assume that the system \eqref{sys} is asymptotically stable. If the following condition holds
\begin{equation}
 2\delta R-BB^{\top}\geq0. \label{cond} 
\end{equation} 
 Then $Q=\delta H$ solves \eqref{go} and $\breve P=\delta H^{-1}$ is a solution to \eqref{gc}.
\end{proposition}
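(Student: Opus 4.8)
The plan is to verify both claims by direct substitution, since the two proposed Gramians are given explicitly. First I would read off the realization of \eqref{sys} in the standard form \eqref{LTI}: comparing the two shows that $A = FH$ with $F = J-R$, while $B$ is unchanged and $C = B^{\top}H$. The whole argument then rests on one elementary identity, namely
\begin{equation*}
F + F^{\top} = (J-R) + (J-R)^{\top} = -2R,
\end{equation*}
which follows from $J^{\top}=-J$ and $R^{\top}=R$.

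For the observability inequality \eqref{go}, I substitute $Q=\delta H$ and $C=B^{\top}H$, using $H=H^{\top}$ and $(FH)^{\top}=HF^{\top}$:
\begin{equation*}
QA + A^{\top}Q + C^{\top}C = \delta H F H + \delta H F^{\top} H + H B B^{\top} H = \delta H\,(F+F^{\top})\,H + H B B^{\top} H.
\end{equation*}
Inserting $F+F^{\top}=-2R$ and factoring yields $-H\,(2\delta R - BB^{\top})\,H$. Since $H(\cdot)H$ is a congruence it preserves positive semi-definiteness, so the hypothesis \eqref{cond} forces this quantity to be $\leq 0$. That is exactly \eqref{go}, and $Q=\delta H>0$ because $H>0$ and $\delta>0$.

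For the controllability inequality \eqref{gc}, I substitute $\breve P=\delta H^{-1}$; here the factors of $H$ and $H^{-1}$ cancel even more cleanly:
\begin{equation*}
A \breve P + \breve P A^{\top} + B B^{\top} = \delta F H H^{-1} + \delta H^{-1} H F^{\top} + B B^{\top} = \delta(F+F^{\top}) + BB^{\top} = -(2\delta R - BB^{\top}),
\end{equation*}
again using $F+F^{\top}=-2R$. The hypothesis \eqref{cond} now gives $\leq 0$ directly, and $\breve P=\delta H^{-1}>0$.

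I do not anticipate a genuine obstacle: the argument is a short computation, and the only points requiring care are the transpose bookkeeping (repeatedly using $H=H^{\top}$ and $(FH)^{\top}=HF^{\top}$) and the observation that in the observability case the sign is preserved under congruence by $H$. The asymptotic stability assumption is what guarantees that $Q$ and $\breve P$ legitimately play the role of (generalized) Gramians, but it is not invoked beyond ensuring that the setting of \eqref{go}--\eqref{gc} is meaningful.
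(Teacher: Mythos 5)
Your proof is correct and follows essentially the same route as the paper: direct substitution of $Q=\delta H$ and $\breve P=\delta H^{-1}$ into \eqref{go} and \eqref{gc} written in the PH form $A=FH$, $C=B^{\top}H$, followed by the identity $F+F^{\top}=-2R$ and the congruence argument with $H$. The paper's proof performs exactly these computations (phrased as equivalences via the congruence $H(\cdot)H$), so there is no substantive difference.
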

%%%%%%%%%%%%%%%%%%%%%%%%%%%%%%%%%
\begin{proof}
 To establish the proof note that for CTLTI PH systems \eqref{go} and \eqref{gc} take the form

 \begin{eqnarray}
 QFH+H F^{\top}Q+H BB^{\top}H &\leq& 0 \label{gog} \\
 FH \breve P+\breve PH F^{\top}+BB^{\top} &\leq& 0, \label{gcg} 
\end{eqnarray}
respectively. Hence, substituting $Q=\delta H$ in \eqref{gog}, we obtain

\begin{equation*}
 \begin{array}{rcl}
  0&\geq&\delta H FH+\delta H F^{\top}H+H BB^{\top}H \\
  &=&H(BB^{\top}-2 \delta R)H\\
  \Longleftrightarrow 0&\leq& 2\delta R-BB^{\top}.
 \end{array}
\end{equation*} 
 
 On the other hand, replacing $\breve P=\delta H^{-1}$ in \eqref{gcg}, we have
 
 \begin{equation*}
  \begin{array}{rcl}
   0&\geq&\delta F+\delta F^{\top}+BB^{\top}\\
   &=& -2\delta R+BB^{\top} \\ 
   \Longleftrightarrow 0&\leq& 2\delta R-BB^{\top}.
  \end{array}
 \end{equation*} 
\end{proof}
%%%%%%%%%%%%%%%%%%%%%%%%%%%%%%%%%
%%%%%%%%%%%%%%%%%%%%%%%%%%%%%%%%%
%%%%%%%%%%%%%%%%%%%%%%%%%%%%%%%%%
Condition \eqref{cond} is satisfied by systems that have dissipation in all the input channels, e.g., fully damped mechanical systems. Nonetheless,
$R$ and $B$ are system parameters, thus, 
it might happen that condition \eqref{cond} is not satisfied by the system \eqref{sys}. In order to overcome this issue, 
below we state two propositions to identify generalized Gramians such that the triplet $(Q, \breve{P}, H)$ verifies \eqref{commute} and solves the Lyapunov inequalities \eqref{gog} and \eqref{gcg}. These propositions represent the main result of this paper in terms of generalized balancing with PH structure preservation\\
%%%%%%%%%%%%%%%%%%%%%%%%%%%%%%%%%
%%%%%%%%%%%%%%%%%%%%%%%%%%%%%%%%%
%%%%%%%%%%%%%%%%%%%%%%%%%%%%%%%%%
\begin{proposition}\label{gobs}
 Let $\breve P$ be a solution to \eqref{gcg}. Consider a full rank matrix $\phi_{P}\in \rea^{n\times n}$ verifying the following
 
 \begin{equation*}
\begin{array}{rcl}
    \breve P&=&\phi_{P}^{\top}\phi_{P} \\
   \phi_{P}H\phi_{P}^{\top}&=&U_{H P}\Lambda_{H P}U_{H P}^{\top},
\end{array} 
 \end{equation*} 
 where $U_{H P}$ is an orthogonal matrix, and $\Lambda_{H P}$ is a diagonal matrix whose entries are the singular values of $\phi_{P}H\phi_{P}^{\top}$, see the notation at the end of Section \ref{sec:int}.
 Define the matrices 
 \begin{equation}
  \begin{array}{rcl}
   \mathcal{F}_{c}&:=&U_{H P}^{\top}\phi_{P}^{-\top}F\phi_{P}^{-1}U_{H P} \\[.1cm] \mathcal{B}_{c}&:=&U_{H P}^{\top}\phi_{P}^{-\top}B.
  \end{array}\label{exgo}
 \end{equation} 
 Assume that
 \begin{equation}
  -\Lambda_{QP}^{2}\Lambda_{H P}^{-1}\mathcal{F}_{c}-\mathcal{F}_{c}^{\top}\Lambda_{H P}^{-1}\Lambda_{QP}^{2}-\mathcal{B}_{c}\mathcal{B}_{c}^{\top}\geq 0 \label{cexgo} 
 \end{equation} 
 holds for a diagonal matrix $\Lambda_{QP}$. Hence, \eqref{gog} is solved by
 \begin{equation}
  Q=\phi_{P}^{-1}U_{H P}\Lambda_{QP}^{2}U_{H P}^{\top}\phi_{P}^{-\top}. \label{Q}
 \end{equation} 
 Moreover, the transformation
 \begin{equation}
  W_{gc}=\phi_{P}^{\top}U_{H P}\Lambda_{QP}^{-\frac{1}{2}} \label{Wc} 
 \end{equation} 
 balances the system and diagonalizes $H$.
\end{proposition}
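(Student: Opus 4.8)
The plan is to collapse both assertions into direct matrix manipulations by passing to the single congruence $V := \phi_P^{-1} U_{HP}$, which is invertible since $\phi_P$ is full rank and $U_{HP}$ is orthogonal. With this notation the data of the problem become transparent: from $\breve P = \phi_P^{\top}\phi_P$ and $\phi_P H \phi_P^{\top} = U_{HP}\Lambda_{HP}U_{HP}^{\top}$ one reads off $H = V\Lambda_{HP}V^{\top}$, while the definitions \eqref{exgo} become $\mathcal{F}_{c} = V^{\top}F V$ and $\mathcal{B}_{c} = V^{\top}B$, and the proposed solution \eqref{Q} is simply $Q = V\Lambda_{QP}^{2} V^{\top}$. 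In particular $Q>0$, so it is an admissible candidate Gramian.

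First I would verify that $Q$ solves \eqref{gog}. Substituting $Q = V\Lambda_{QP}^{2} V^{\top}$ and $H = V\Lambda_{HP}V^{\top}$ into the left-hand side of \eqref{gog} and collecting the outer factors $V(\cdot)V^{\top}$, the inequality $QFH + HF^{\top}Q + HBB^{\top}H \leq 0$ becomes
\begin{equation*}
V\left[\Lambda_{QP}^{2}\mathcal{F}_{c}\Lambda_{HP} + \Lambda_{HP}\mathcal{F}_{c}^{\top}\Lambda_{QP}^{2} + \Lambda_{HP}\mathcal{B}_{c}\mathcal{B}_{c}^{\top}\Lambda_{HP}\right]V^{\top} \leq 0,
\end{equation*}
where I used $V^{\top}F V = \mathcal{F}_{c}$ and $V^{\top}B = \mathcal{B}_{c}$. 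Since $V$ is invertible, this holds if and only if the bracketed matrix is negative semi-definite. Performing the congruence by $\Lambda_{HP}^{-1}$ (symmetric and invertible, hence sign-preserving) on both sides and using that the diagonal matrices $\Lambda_{HP}$ and $\Lambda_{QP}$ commute, the bracket is turned into $\Lambda_{QP}^{2}\Lambda_{HP}^{-1}\mathcal{F}_{c} + \mathcal{F}_{c}^{\top}\Lambda_{HP}^{-1}\Lambda_{QP}^{2} + \mathcal{B}_{c}\mathcal{B}_{c}^{\top}$, whose negativity is precisely the hypothesis \eqref{cexgo}. This closes the first claim.

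Finally I would confirm that $W_{gc}=\phi_P^{\top}U_{HP}\Lambda_{QP}^{-\frac{1}{2}}$ balances the system and diagonalizes $H$. Using $U_{HP}^{\top}U_{HP}=I$ and $\breve P = \phi_P^{\top}\phi_P$, a direct substitution gives $W_{gc}^{\top}Q W_{gc} = \Lambda_{QP}$ and $W_{gc}^{-1}\breve P W_{gc}^{-\top} = \Lambda_{QP}$, since all the $\phi_P$ and $U_{HP}$ factors telescope; hence the transformed observability and controllability Gramians coincide and are diagonal, that is $W_{gc}^{-1}\breve P Q W_{gc}=\Lambda_{QP}^{2}$ as required by \eqref{balPQ}. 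The same cancellations yield $W_{gc}^{\top}H W_{gc} = \Lambda_{QP}^{-\frac{1}{2}}\Lambda_{HP}\Lambda_{QP}^{-\frac{1}{2}}=\Lambda_{QP}^{-1}\Lambda_{HP}$, which is diagonal; by the discussion around \eqref{conLam} this guarantees that truncation preserves the PH structure. The only genuinely delicate point is the bookkeeping in the second step: one must select the $\Lambda_{HP}^{-1}$-congruence (rather than any other) and exploit the commutativity of the two diagonal matrices so that the reduced inequality matches \eqref{cexgo} exactly; everything else is telescoping of orthogonal and inverse factors.
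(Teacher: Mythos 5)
Your proof is correct and follows essentially the same route as the paper: the paper establishes the Lyapunov inequality by applying the congruence $\phi_P^{-1}U_{HP}\Lambda_{HP}(\cdot)\Lambda_{HP}U_{HP}^{\top}\phi_P^{-\top}$ to the hypothesis \eqref{cexgo}, which is exactly your calculation read in the reverse direction (your $V=\phi_P^{-1}U_{HP}$ with the $\Lambda_{HP}^{-1}$-congruence and commutativity of diagonal factors), and the balancing claims are verified by the same telescoping computations, giving $W_{gc}^{\top}QW_{gc}=W_{gc}^{-1}\breve{P}W_{gc}^{-\top}=\Lambda_{QP}$ and $W_{gc}^{\top}HW_{gc}=\Lambda_{QP}^{-1}\Lambda_{HP}$.
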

%%%%%%%%%%
\begin{proof}
 To establish the proof we define
 \begin{equation*}
  \mathcal{X}_{o}:=-\Lambda_{QP}^{2}\Lambda_{H P}^{-1}\mathcal{F}_{c}-\mathcal{F}_{c}^{\top}\Lambda_{H P}^{-1}\Lambda_{QP}^{2}-\mathcal{B}_{c}\mathcal{B}_{c}^{\top}.
 \end{equation*} 
 Note that, if \eqref{cexgo} holds, we have the following chain of implications
 \begin{equation*}
\begin{array}{rcl}
   \mathcal{X}_{o}&\geq&0\\[.1cm]
  \Longleftrightarrow  \phi_{P}^{-1}U_{H P}\Lambda_{H P}\mathcal{X}_{o} \Lambda_{H P}U_{H P}^{\top}\phi_{P}^{-\top}  &\geq&0\\[.1cm]
  \Longleftrightarrow -QFH-H F^{\top}Q-H BB^{\top}H&\geq&0\\\Longleftrightarrow QFH+H F^{\top}Q+H BB^{\top}H&\leq&0
\end{array}
 \end{equation*} 
 where we used \eqref{exgo} and \eqref{Q}. Moreover,
 
 \begin{equation*}
  \begin{array}{rcl}
   W_{gc}^{\top}QW_{gc}&=& \Lambda_{QP}\\[.1cm]
   W_{gc}^{-1}\breve PW_{gc}^{-\top}&=&\Lambda_{QP}\\[.1cm]
   W_{gc}^{\top}H W_{gc}&=&\Lambda_{QP}^{-1}\Lambda_{H P}.
  \end{array}
 \end{equation*} 
 This completes the proof.
\end{proof}
%%%%%%%%%%%%%%%%%%%%%%%%%%%%%%%%%
%%%%%%%%%%%%%%%%%%%%%%%%%%%%%%%%%
%%%%%%%%%%%%%%%%%%%%%%%%%%%%%%%%%
The following proposition is the dual version of Proposition \ref{gobs} and relaxes condition \eqref{cond}, in this case, for a given generalized observability Gramian $Q$.\\[.2cm]
%%%%%%%%%%%%%%%%%%%%%%%%%%%%%%%%%
%%%%%%%%%%%%%%%%%%%%%%%%%%%%%%%%%
%%%%%%%%%%%%%%%%%%%%%%%%%%%%%%%%%
\begin{proposition}\label{gctr}
 Let $Q$ be a solution to \eqref{gog}. Consider a full rank matrix $\phi_{Q}\in \rea^{n \times n}$ verifying the following
 
 \begin{equation*}
 \begin{array}{rcl}
    Q&=&\phi_{Q}^{\top}\phi_{Q} \\
 \phi_{Q}^{-\top}H\phi_{Q}^{-1}&=&U_{H Q}\Lambda_{H Q}U_{H Q}^{\top}.
 \end{array}
 \end{equation*} 
 
 Define the matrices 
 
 \begin{equation}
  \begin{array}{rcl}
   \mathcal{F}_{o}&:=&U_{H Q}^{\top}\phi_{Q}F\phi_{Q}^{\top}U_{H Q} \\[.1cm] \mathcal{B}_{o}&:=&U_{H Q}^{\top}\phi_{Q}B.
  \end{array}\label{exgc}
 \end{equation} 
 
 Assume that
 
 \begin{equation}
  -\mathcal{F}_{o}\Lambda_{H Q}\Lambda_{QP}^{2}-\Lambda_{QP}^{2}\Lambda_{H Q}\mathcal{F}_{o}^{\top}-\mathcal{B}_{o}\mathcal{B}_{o}^{\top}\geq 0 \label{cexgc} 
 \end{equation} 
 holds for a diagonal matrix $\Lambda_{QP}$. Hence, \eqref{gcg} is solved by
 
 \begin{equation}
  \breve P=\phi_{Q}^{-1}U_{H Q}\Lambda_{QP}^{2}U_{H Q}^{\top}\phi_{Q}^{-\top}. \label{P}
 \end{equation} 
 
 Moreover, the transformation
 
 \begin{equation}
  W_{go}=\phi_{Q}^{-1}U_{H Q}\Lambda_{QP}^{\frac{1}{2}} \label{Wo} 
 \end{equation} 
 
 balances the system and diagonalizes $H$.
\end{proposition}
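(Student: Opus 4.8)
The plan is to mirror the proof of Proposition \ref{gobs}, of which this is the dual, reversing the roles of the observability and controllability data. First I would introduce the shorthand
\begin{equation*}
 \mathcal{X}_{c}:=-\mathcal{F}_{o}\Lambda_{H Q}\Lambda_{QP}^{2}-\Lambda_{QP}^{2}\Lambda_{H Q}\mathcal{F}_{o}^{\top}-\mathcal{B}_{o}\mathcal{B}_{o}^{\top},
\end{equation*}
so that hypothesis \eqref{cexgc} reads $\mathcal{X}_{c}\geq 0$, and then exhibit an invertible congruence that carries $\mathcal{X}_{c}$ onto the negative of the left-hand side of \eqref{gcg}. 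Since congruence by an invertible matrix preserves the sign of a symmetric form, this yields the chain $\mathcal{X}_{c}\geq 0 \Longleftrightarrow -FH\breve P-\breve P HF^{\top}-BB^{\top}\geq 0 \Longleftrightarrow FH\breve P+\breve PHF^{\top}+BB^{\top}\leq 0$, which is exactly \eqref{gcg} with $\breve P$ given by \eqref{P}.

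The congruence I would use is $M(\cdot)M^{\top}$ with $M:=\phi_{Q}^{-1}U_{H Q}$. The verification is term by term, using the two factorizations in the statement, namely $Q=\phi_{Q}^{\top}\phi_{Q}$ and $H=\phi_{Q}^{\top}U_{H Q}\Lambda_{H Q}U_{H Q}^{\top}\phi_{Q}$ (the latter obtained by rearranging $\phi_{Q}^{-\top}H\phi_{Q}^{-1}=U_{H Q}\Lambda_{H Q}U_{H Q}^{\top}$), together with the definitions \eqref{exgc} of $\mathcal{F}_{o},\mathcal{B}_{o}$ and formula \eqref{P}. Substituting these, the orthogonal factors $U_{H Q}U_{H Q}^{\top}$ collapse and one finds the three clean identities $M\mathcal{F}_{o}\Lambda_{H Q}\Lambda_{QP}^{2}M^{\top}=FH\breve P$, $M\Lambda_{QP}^{2}\Lambda_{H Q}\mathcal{F}_{o}^{\top}M^{\top}=\breve P HF^{\top}$, and $M\mathcal{B}_{o}\mathcal{B}_{o}^{\top}M^{\top}=BB^{\top}$, where I would record the auxiliary products $H\breve P=\phi_{Q}^{\top}U_{H Q}\Lambda_{H Q}\Lambda_{QP}^{2}U_{H Q}^{\top}\phi_{Q}^{-\top}$ and $\breve P H=\phi_{Q}^{-1}U_{H Q}\Lambda_{QP}^{2}\Lambda_{H Q}U_{H Q}^{\top}\phi_{Q}$ to keep the bookkeeping transparent. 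The point to watch---and the one genuine difference from Proposition \ref{gobs}---is that here the congruence carries no $\Lambda_{H Q}$ factor, whereas the observability version needed a $\Lambda_{H P}$ flanking the form; this asymmetry traces back to the fact that the free term in \eqref{gcg} is $BB^{\top}$ (with no $H$), while in the observability inequality \eqref{gog} it is $HBB^{\top}H$. Placing the $H$-weights correctly is the main obstacle, and essentially the only nontrivial step.

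It remains to check that $W_{go}$ in \eqref{Wo} does what is claimed, which is a direct computation with $W_{go}=\phi_{Q}^{-1}U_{H Q}\Lambda_{QP}^{1/2}$ and its inverse $W_{go}^{-1}=\Lambda_{QP}^{-1/2}U_{H Q}^{\top}\phi_{Q}$. Using $Q=\phi_{Q}^{\top}\phi_{Q}$, \eqref{P}, and the factorization of $H$, the orthogonal and $\phi_{Q}$ factors again cancel and diagonal matrices commute, giving $W_{go}^{\top}QW_{go}=\Lambda_{QP}$, $W_{go}^{-1}\breve P W_{go}^{-\top}=\Lambda_{QP}$, and $W_{go}^{\top}HW_{go}=\Lambda_{QP}\Lambda_{H Q}$. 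The first two identities show that the system is balanced, and the last shows that $H$ is rendered diagonal, so that in particular condition \eqref{conLam} holds, completing the proof.
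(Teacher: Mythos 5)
Your proposal is correct and follows essentially the same route as the paper's proof: the same definition of $\mathcal{X}_{c}$, the same congruence $\phi_{Q}^{-1}U_{HQ}(\cdot)U_{HQ}^{\top}\phi_{Q}^{-\top}$ to carry hypothesis \eqref{cexgc} onto \eqref{gcg}, and the same three balancing identities $W_{go}^{\top}QW_{go}=\Lambda_{QP}$, $W_{go}^{-1}\breve P W_{go}^{-\top}=\Lambda_{QP}$, $W_{go}^{\top}HW_{go}=\Lambda_{HQ}\Lambda_{QP}$. Your term-by-term verification and the observation that, unlike Proposition \ref{gobs}, no $\Lambda_{HQ}$ weights flank the congruence are correct elaborations of steps the paper leaves implicit.
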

%%%%%%%%%%%%%%%%%%%%%%%%
\begin{proof}
 Define 
 \begin{equation}
  \mathcal{X}_{c}:=-\mathcal{F}_{o}\Lambda_{H Q}\Lambda_{QP}^{2}-\Lambda_{QP}^{2}\Lambda_{H Q}\mathcal{F}_{o}^{\top}-\mathcal{B}_{o}\mathcal{B}_{o}^{\top}.
 \end{equation} 
 Therefore, if \eqref{cexgc} is satisfied, we have
 \begin{equation*}
\begin{array}{rcl}
   \mathcal{X}_{c}&\geq&0\\[.1cm]
  \Longleftrightarrow  \phi_{Q}^{-1}U_{H Q}\mathcal{X}_{c} U_{H Q}^{\top}\phi_{Q}^{-\top}  &\geq&0\\[.1cm]
  \Longleftrightarrow -FH\breve P-\breve PH F^{\top}- BB^{\top}&\geq&0\\\Longleftrightarrow FH\breve P+\breve PH F^{\top}+ BB^{\top}&\leq&0,
\end{array}
 \end{equation*}
 where we used \eqref{exgc} and \eqref{P}. To complete the proof, note that
 \begin{equation*}
  \begin{array}{rcl}
   W_{go}^{\top}QW_{go}&=&  \Lambda_{QP}\\[0.1cm]
   W_{go}^{-1}\breve PW_{go}^{-\top}&=&\Lambda_{QP}\\[0.1cm]
   W_{go}^{\top}H W_{go}&=&  \Lambda_{H Q}\Lambda_{QP}.
  \end{array}
 \end{equation*}  
\end{proof}
%%%%%%%%%%%%%%%%%%%%%%%%%%%%%%%%%
%%%%%%%%%%%%%%%%%%%%%%%%%%%%%%%%%
%%%%%%%%%%%%%%%%%%%%%%%%%%%%%%%%%
% \begin{remark}
%  The generalized Gramians obtained in Propositions \ref{gobs} and \ref{gctr} satisfy
% \begin{equation}
%  H P Q = Q PH\label{conmute}
% \end{equation}  
% Therefore, the aforementioned propositions provide sufficient conditions to tackle down two problems, that is, to find solutions $Q$ and $\breve{P}$ to \eqref{go} and \eqref{gc}, respectively, while
% ensuring that those solutions satisfy \eqref{conmute}.
% \end{remark}
In Propositions \ref{gobs} and \ref{gctr}, the condition \eqref{cond} is relaxed by imposing a particular structure to the generalized observability and controllability Gramians, respectively.
Such structure depends on the Hamiltonian matrix, however, it is less restrictive than \eqref{cond}. Indeed, if this latter condition is satisfied, then \eqref{cexgo} and \eqref{cexgc} hold.\\[0.2cm]
Using the results presented in this section, below we study extended balancing of CTLTI PH systems.
As was mentioned in Section \ref{sec:eg}, the use of extended Gramians can be advantageous for different purposes, for instance, to obtain a lower error bound
or to impose a more particular structure to the reduced order model.
%%%%%%%%%%%%%%%%%%%%%%%%%%%%%%%%%
%%%%%%%%%%%%%%%%%%%%%%%%%%%%%%%%%
%%%%%%%%%%%%%%%%%%%%%%%%%%%%%%%%%
\subsection{Extended balancing of CTLTI PH systems}
\label{sec:ebph}
%%%%%%%%%%%%%%%%%%%%%%%%%%%%%%%%%
%%%%%%%%%%%%%%%%%%%%%%%%%%%%%%%%%

Similar to the generalized balancing case, in this section we provide sufficient conditions for the existence of a linear transformation $W_{e}$ that balances the system and diagonalizes the Hamiltonian matrix. Towards this end, below we introduce 
two propositions that provide a suitable transformation $W_{e}$. Such propositions constitute the main result of this work regarding extended balancing with PH structure preservation.\\
%%%%%%%%%%%%%%%%%%%%%%%%%%%%%%%%%
%%%%%%%%%%%%%%%%%%%%%%%%%%%%%%%%%
%%%%%%%%%%%%%%%%%%%%%%%%%%%%%%%%%
\begin{proposition}\label{prop:ex1} 
 Let $\breve P$ be a solution to \eqref{gcg} such that $X_{c}>0$. Select $\beta$ and $\Gamma_c$ such that \eqref{posT} holds and $T$, defined in \eqref{Tsel}, solves LMI \eqref{LMIcontr}.
 Consider a full rank matrix $\phi_{T}\in \rea^{n\times n}$ verifying the following
 
 \begin{equation*}
  \begin{array}{rcl}
   T^{-1}&=&\phi_{T}^{\top}\phi_{T}\\
   \phi_{T}H\phi_{T}^{\top}&=&U_{H T}\Lambda_{H T}U_{H T}^{\top}.
  \end{array}
 \end{equation*} 
 
 Define the matrices 
 \begin{equation}
  \begin{array}{rcl}
   \mathcal{F}_{ec}&:=&U_{H T}^{\top}\phi_{T}^{-\top}F\phi_{T}^{-1}U_{H T} \\[.1cm] \mathcal{B}_{ec}&:=&U_{H T}^{\top}\phi_{T}^{-\top}B.
  \end{array}\label{Fec}
 \end{equation} 
 Assume that
 \begin{equation}
  -\Lambda_{QT}^{2}\Lambda_{H T}^{-1}\mathcal{F}_{ec}-\mathcal{F}_{ec}^{\top}\Lambda_{H T}^{-1}\Lambda_{QT}^{2}-\mathcal{B}_{ec}\mathcal{B}_{ec}^{\top}> 0 \label{cexeo}
 \end{equation} 
 holds for a diagonal matrix $\Lambda_{QT}$. Then, \eqref{gog} is solved by
 \begin{equation}
  Q=\phi_{T}^{-1}U_{H T}\Lambda_{QT}^{2}U_{H T}^{\top}\phi_{T}^{-\top}. \label{Qe}
 \end{equation} 
 Select $\alpha$ such that the matrix
 \begin{equation}
 S=\frac{1}{\alpha}Q \label{Se}
 \end{equation} 
 solves LMI \eqref{LMIobs}. Then, the invertible transformation 
 \begin{equation}
  W_{ec}=\sqrt[4]{\alpha}\phi_{T}^{\top}U_{H T}\Lambda_{QT}^{-\frac{1}{2}} \label{Wec}
 \end{equation} 
 balances the system and diagonalizes $H$.   
\end{proposition}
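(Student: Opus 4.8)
The plan is to mirror the congruence argument of Proposition \ref{gobs}, now with the extended controllability datum $T^{-1}$ playing the role previously held by the generalized Gramian $\breve P$, and then to manufacture a compatible extended observability Gramian $S$ out of $Q$ via Theorem \ref{tm:obs}. The controllability side needs no further work: by hypothesis $\beta$ and $\Gamma_c$ are chosen through \eqref{posT}--\eqref{Tsel} so that $T$ already solves LMI \eqref{LMIcontr}. So there are really three tasks: show $Q$ in \eqref{Qe} solves \eqref{gog}, build an admissible $S$, and verify the two identities making $W_{ec}$ a balancing-and-diagonalizing transformation.

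First I would establish that $Q$ defined in \eqref{Qe} solves \eqref{gog}. Introduce $\mathcal{X}_o$ as the left-hand side of \eqref{cexeo}. Using the definitions \eqref{Fec} together with the factorization $T^{-1}=\phi_T^\top\phi_T$ and the spectral decomposition $\phi_T H\phi_T^\top=U_{HT}\Lambda_{HT}U_{HT}^\top$, the congruence $M\mapsto \phi_T^{-1}U_{HT}\Lambda_{HT}M\Lambda_{HT}U_{HT}^\top\phi_T^{-\top}$ sends $\mathcal{X}_o$ to $-QFH-HF^\top Q-HBB^\top H$, exactly the expression governing \eqref{gog}. Since congruence preserves positive (semi-)definiteness, the strict inequality \eqref{cexeo} yields not only that $Q$ solves \eqref{gog} but, crucially, that $X_o=-QFH-HF^\top Q-HBB^\top H>0$. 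This is precisely why \eqref{cexeo} is demanded to be strict, in contrast to the non-strict \eqref{cexgo} of Proposition \ref{gobs}.

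Next I would construct the extended observability Gramian. Because $Q$ is congruent to $\Lambda_{QT}^2>0$ it is positive definite, and we have just obtained $X_o>0$; these are exactly the hypotheses of the symmetric-$S$ part of Theorem \ref{tm:obs}. Taking $\Gamma_o=0$ in \eqref{symS} collapses the selection to $S=\tfrac{1}{\alpha}Q$, as in \eqref{Se}, and Theorem \ref{tm:obs} then guarantees an $\alpha$ large enough for which this $S=S^\top>0$ solves LMI \eqref{LMIobs}. Both extended LMIs are thereby satisfied.

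Finally I would verify by direct computation that $W_{ec}$ in \eqref{Wec} performs the extended balancing and diagonalizes $H$. Using $S=\tfrac{1}{\alpha}Q$, the orthogonality $U_{HT}U_{HT}^\top=I$, and $T^{-1}=\phi_T^\top\phi_T$, one finds $W_{ec}^\top S W_{ec}=W_{ec}^{-1}T^{-1}W_{ec}^{-\top}=\tfrac{1}{\sqrt{\alpha}}\Lambda_{QT}$, which is the diagonal matrix playing the role of $\Lambda_{ST}$, so the extended-balanced identity $S=T^{-1}=\Lambda_{ST}$ holds in the new coordinates; and $W_{ec}^\top H W_{ec}=\sqrt{\alpha}\,\Lambda_{QT}^{-1}\Lambda_{HT}$, which is diagonal, so $H$ is diagonalized. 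The only genuinely non-routine point is recognizing that the strict form of \eqref{cexeo} is exactly what licenses the symmetric-$S$ construction of Theorem \ref{tm:obs}; the remaining identities are bookkeeping with the two factorizations, identical in spirit to Proposition \ref{gobs}.
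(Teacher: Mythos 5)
Your proposal is correct and follows essentially the same route as the paper's proof: the congruence by $\phi_T^{-1}U_{HT}\Lambda_{HT}(\cdot)\Lambda_{HT}U_{HT}^{\top}\phi_T^{-\top}$ carrying the strict inequality \eqref{cexeo} to $X_o>0$ (hence \eqref{gog}), the choice $\Gamma_o=\mathbf{0}_{n\times n}$ in \eqref{symS} to obtain $S=\frac{1}{\alpha}Q$ via Theorem \ref{tm:obs}, and the definition $\Lambda_{ST}=\frac{1}{\sqrt{\alpha}}\Lambda_{QT}$ in the final verification. The only (harmless) variation is that you verify $W_{ec}^{\top}SW_{ec}=W_{ec}^{-1}T^{-1}W_{ec}^{-\top}=\Lambda_{ST}$ separately, which is marginally stronger than the paper's check of the product condition $W_{ec}^{-1}T^{-1}SW_{ec}=\Lambda_{ST}^{2}$.
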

%%%%%%%%%%%%%%%%%%%%%%%%%%%%%%%%%
\begin{proof}
 Define
 \begin{equation*}
  \mathcal{X}_{eo}:=-\Lambda_{QT}^{2}\Lambda_{H T}^{-1}\mathcal{F}_{ec}-\mathcal{F}_{ec}^{\top}\Lambda_{H T}^{-1}\Lambda_{QT}^{2}-\mathcal{B}_{ec}\mathcal{B}_{ec}^{\top}.
 \end{equation*} 
 Then, the inequality \eqref{cexeo} is satisfied if and only if
 \begin{equation}
\begin{array}{rcl}
\mathcal{X}_{eo}&>&0\\[0.1cm]
   \phi_{T}^{-1}U_{H T}\Lambda_{H T}\mathcal{X}_{eo} \Lambda_{H T}U_{H T}^{\top}\phi_{T}^{-\top}  &>&0\\[.1cm]
  \Longleftrightarrow X_o&>&0 \\[0.1cm]
  \Longleftrightarrow QFH+H F^{\top}Q+H BB^{\top}H&<&0\\ \Longleftrightarrow QFH+H F^{\top}Q+H BB^{\top}H&\leq&0,
\end{array}
 \end{equation} 
 where we used 
 \begin{equation}
  A=FH. \label{A}
 \end{equation} 
 Fix $\Gamma_{o}=\mathbf{0}_{n\times n}$ in \eqref{symS}. Hence, for $\alpha$ large enough, the selection of $S$ given in \eqref{Se} solves the LMI \eqref{LMIobs}.

 To establish the last part of the proof define
 \begin{equation}
  \Lambda_{ST}:=\frac{1}{\sqrt{\alpha}}\Lambda_{QT},
 \end{equation} 
 note that
 \begin{equation}
  \begin{array}{rcl}
   W_{ec}^{-1}T^{-1}SW_{ec}&=& \Lambda_{ST}^{2} \\[.15cm]
   W_{ec}^{\top}H W_{ec}&=&\Lambda_{H T}\Lambda_{ST}^{-1}.
  \end{array}
 \end{equation} 
 \end{proof}
 %%%%%%%%%%%%%%%%%%%%%%%%%%%%%%%%%
%%%%%%%%%%%%%%%%%%%%%%%%%%%%%%%%%
%%%%%%%%%%%%%%%%%%%%%%%%%%%%%%%%%
The following proposition is the dual version of Proposition \ref{prop:ex1}.\\
%%%%%%%%%%%%%%%%%%%%%%%%%%%%%%%%%
%%%%%%%%%%%%%%%%%%%%%%%%%%%%%%%%%
%%%%%%%%%%%%%%%%%%%%%%%%%%%%%%%%%
\begin{proposition}\label{prop:ex2}
 Let $Q$ be a solution to \eqref{gog} such that $X_{o}>0$. Select $\alpha$ and $\Gamma_{o}$ such that \eqref{posQ} holds and $S$, defined in \eqref{symS}, solves LMI \eqref{LMIobs}.
 Consider a full rank matrix $\phi_{S}\in \rea^{n \times n}$ verifying the following
 
 \begin{equation*}
 \begin{array}{rcl}
  S&=&\phi_{S}^{\top}\phi_{S}\\
 \phi_{S}^{-\top}H\phi_{S}^{-1}&=&U_{H S}\Lambda_{H S}U_{H S}^{\top}.
 \end{array}
 \end{equation*} 
 
 Define the matrices 
 
 \begin{equation}
  \begin{array}{rcl}
   \mathcal{F}_{eo}&:=&U_{H S}^{\top}\phi_{S}F\phi_{S}^{\top}U_{H S} \\[.1cm] \mathcal{B}_{eo}&:=&U_{H S}^{\top}\phi_{S}B.
  \end{array}\label{exec}
 \end{equation} 
 
 Assume
 
 \begin{equation}
  -\mathcal{F}_{eo}\Lambda_{H S}\Lambda_{SP}^{2}-\Lambda_{SP}^{2}\Lambda_{H S}\mathcal{F}_{eo}^{\top}-\mathcal{B}_{eo}\mathcal{B}_{eo}^{\top}> 0 \label{cexec} 
 \end{equation} 
 holds for a diagonal matrix $\Lambda_{SP}$. Thus, \eqref{gcg} is solved by
 
 \begin{equation}
  \breve P=\phi_{S}^{-1}U_{H S}\Lambda_{SP}^{2}U_{H S}^{\top}\phi_{S}^{-\top}. \label{Pe}
 \end{equation} 
 
Select $\beta$ such that the matrix
 \begin{equation}
   T^{-1}=\beta \breve P \label{Te}
 \end{equation} 
 solves LMI \eqref{LMIcontr}. Then, 
 
  \begin{equation}
  W_{eo}=\sqrt[4]{\beta}\phi_{S}^{-1}U_{H S}\Lambda_{SP}^{\frac{1}{2}} \label{Weo}
 \end{equation} 
 
 balances the system and diagonalizes $H$.   
\end{proposition}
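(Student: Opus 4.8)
The plan is to run the dual of the argument used for Proposition \ref{prop:ex1}, interchanging the observability and controllability roles just as Proposition \ref{gctr} mirrors Proposition \ref{gobs}. Since $S$ solves \eqref{LMIobs} as the symmetric positive definite matrix \eqref{symS}, the full-rank factor $\phi_{S}$ with $S=\phi_{S}^{\top}\phi_{S}$ exists, and I would first verify that the hypothesis \eqref{cexec} is equivalent to $\breve P$ in \eqref{Pe} solving \eqref{gcg} strictly.

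To do this I would set $\mathcal{X}_{ec}:=-\mathcal{F}_{eo}\Lambda_{H S}\Lambda_{SP}^{2}-\Lambda_{SP}^{2}\Lambda_{H S}\mathcal{F}_{eo}^{\top}-\mathcal{B}_{eo}\mathcal{B}_{eo}^{\top}$ and apply the congruence $\phi_{S}^{-1}U_{H S}(\cdot)U_{H S}^{\top}\phi_{S}^{-\top}$, which preserves sign-definiteness because $\phi_{S}^{-1}U_{H S}$ is invertible. Using \eqref{exec}, the factorization $\phi_{S}^{-\top}H\phi_{S}^{-1}=U_{H S}\Lambda_{H S}U_{H S}^{\top}$, and $U_{H S}U_{H S}^{\top}=I$, the three blocks collapse via the identities $H\breve P=\phi_{S}^{\top}U_{H S}\Lambda_{H S}\Lambda_{SP}^{2}U_{H S}^{\top}\phi_{S}^{-\top}$ and $\phi_{S}^{-1}U_{H S}\mathcal{B}_{eo}\mathcal{B}_{eo}^{\top}U_{H S}^{\top}\phi_{S}^{-\top}=BB^{\top}$, giving $\phi_{S}^{-1}U_{H S}\mathcal{X}_{ec}U_{H S}^{\top}\phi_{S}^{-\top}=-FH\breve P-\breve PH F^{\top}-BB^{\top}$. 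Hence \eqref{cexec} is equivalent to $FH\breve P+\breve PH F^{\top}+BB^{\top}<0$, which in particular implies \eqref{gcg}; this is the exact dual of the chain appearing in the proof of Proposition \ref{gctr}, now with the strict sign carried along.

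The new ingredient compared with the generalized case is the passage to the extended Gramian $T$. I would show that the strict inequality just obtained forces $X_{c}>0$: with $A=FH$ from \eqref{A} and $P=\breve P^{-1}$, congruence by $\breve P$ yields $\breve P X_{c}\breve P=-(FH\breve P+\breve PH F^{\top}+BB^{\top})>0$, so $X_{c}>0$ since $\breve P$ is invertible. This is precisely the condition under which the \emph{Symmetric $T$} branch of Theorem \ref{tm:contr} applies with $\Gamma_{c}=\mathbf{0}_{n\times n}$ in \eqref{Tsel}: then $T=(\beta\breve P)^{-1}$, i.e. $T^{-1}=\beta\breve P$ as in \eqref{Te}, and for $\beta$ large enough this $T$ solves \eqref{LMIcontr}.

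It then remains to check the two balancing identities by substituting \eqref{Weo}. Writing $W_{eo}^{-1}=\beta^{-1/4}\Lambda_{SP}^{-1/2}U_{H S}^{\top}\phi_{S}$ and $T^{-1}S=\beta\phi_{S}^{-1}U_{H S}\Lambda_{SP}^{2}U_{H S}^{\top}\phi_{S}$, the factors $\phi_{S}\phi_{S}^{-1}$ and $U_{H S}^{\top}U_{H S}$ cancel to give $W_{eo}^{-1}T^{-1}SW_{eo}=\beta\Lambda_{SP}^{2}$, so defining $\Lambda_{ST}:=\sqrt{\beta}\,\Lambda_{SP}$ produces $W_{eo}^{-1}T^{-1}SW_{eo}=\Lambda_{ST}^{2}$, realizing the extended balancing \eqref{balST}. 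Similarly $W_{eo}^{\top}H W_{eo}=\sqrt{\beta}\,\Lambda_{SP}^{1/2}\Lambda_{H S}\Lambda_{SP}^{1/2}=\Lambda_{H S}\Lambda_{ST}$, a diagonal matrix, so $H$ is diagonalized and, after splitting as in \eqref{split}, condition \eqref{conLam} holds and truncation yields the PH reduced model \eqref{phsysr}. I expect the only genuinely delicate step to be the linking inequality $X_{c}>0$: it explains why \eqref{cexec} must be strict (unlike its generalized counterpart \eqref{cexgc}), and it is what guarantees that the chosen $T$ actually solves the controllability LMI; the remaining manipulations are the routine dual bookkeeping of Propositions \ref{gctr} and \ref{prop:ex1}.
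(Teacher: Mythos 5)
Your proposal is correct and follows essentially the same route as the paper's proof: congruence by $\phi_{S}^{-1}U_{HS}$ to turn \eqref{cexec} into the strict inequality $-FH\breve P-\breve PHF^{\top}-BB^{\top}>0$, deducing from it both \eqref{gcg} and $X_{c}>0$ (via $A=FH$), invoking the symmetric-$T$ branch of Theorem \ref{tm:contr} with $\Gamma_{c}=\mathbf{0}_{n\times n}$ so that $T^{-1}=\beta\breve P$ solves \eqref{LMIcontr} for $\beta$ large enough, and verifying $W_{eo}^{-1}T^{-1}SW_{eo}=\Lambda_{ST}^{2}$ and $W_{eo}^{\top}HW_{eo}=\Lambda_{HS}\Lambda_{ST}$ with $\Lambda_{ST}:=\sqrt{\beta}\Lambda_{SP}$. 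You merely make explicit two steps the paper leaves implicit (the congruence $\breve PX_{c}\breve P=-(FH\breve P+\breve PHF^{\top}+BB^{\top})$ and the cancellations in the balancing identities), which is fine.
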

%%%%%%%%%%%%%%%%%%%%%%%%%%%%%%%%%
\begin{proof}
 Define
 \begin{equation*}
  \mathcal{X}_{eo}:=-\mathcal{F}_{eo}\Lambda_{H S}\Lambda_{SP}^{2}-\Lambda_{SP}^{2}\Lambda_{H S}\mathcal{F}_{eo}^{\top}-\mathcal{B}_{eo}\mathcal{B}_{eo}^{\top}.
 \end{equation*} 
 Hence, if \eqref{cexec} holds, we have the following chain of implications
 \begin{equation}
  \begin{array}{rcl}
   \mathcal{X}_{eo}&>&0 \\[0.1cm]
   \Longleftrightarrow\phi_{S}^{-1}U_{H S}\mathcal{X}_{co}U_{H S}^{\top}\phi_{S}^{-\top}&>&0\\[0.1cm]
   \Longleftrightarrow -FH \breve P-\breve PH F^{\top}-BB^{\top}&>&0.
  \end{array}
 \end{equation} 
 Moreover,
 \begin{equation}
 \begin{array}{l}
  -FH \breve P-\breve PH F^{\top}-BB^{\top}>0 \hfill \\[0.1cm] \hfill  \; \; \; \; \; \; \; \; \; \; \; \; \; \; \; \; \; \;   \Longrightarrow \left\lbrace \begin{array}{rcl}
  FH \breve P+\breve PH F^{\top}+BB^{\top}\leq 0 \\[0.1cm]                                                                                          
  X_{c}>0,                                                                                         \end{array}\right.
  \end{array}
 \end{equation}
 where we used \eqref{A}.
 Fix $\Gamma_{c}=\mathbf{0}_{n\times n}$ in \eqref{Tsel}. Accordingly, for $\beta$ large enough, the selection of $T$ given in \eqref{Te} solves the LMI \eqref{LMIcontr}.

 To establish the last part of the proof define
 \begin{equation}
  \Lambda_{ST}:=\sqrt{\beta}\Lambda_{SP}.
 \end{equation} 
 Note that
 \begin{equation}
  \begin{array}{rcl}
   W_{eo}^{-1}T^{-1}SW_{eo}&=& \Lambda_{ST}^{2} \\[.15cm]
   W_{eo}^{\top}H W_{eo}&=&\Lambda_{H S}\Lambda_{ST}.
  \end{array}
 \end{equation} 
\end{proof}
%%%%%%%%%%%%%%%%%%%%%%%%%%%%%%%%%
%%%%%%%%%%%%%%%%%%%%%%%%%%%%%%%%%
%%%%%%%%%%%%%%%%%%%%%%%%%%%%%%%%%
We remark that $\Gamma_{o}$ and $\Gamma_{c}$ are degrees of freedom in the selection of $S$ and $T$, respectively. 
These matrices can be selected in order to improve the error bound or preserve more particular structures as is illustrated in Section \ref{sec:ex}.
\section{Examples}
\label{sec:ex}
%%%%%%%%%%%%%%%%%%%%%%%%

In this section we present two examples to illustrate the applicability of the results reported in previous sections. Both examples represent physical systems, where, the first one is a mass-spring-damper 
mechanical system. While, the second example represents an RLC circuits network. 

\subsection{Mechanical system}\label{sec:mecex}

Consider five mass-spring-damper systems interconnected in series as shown in Fig. \ref{fig:mechanical}. The dynamics that describe this network of mechanical systems are given by

\begin{equation}
 \begin{array}{rcl}
 \begin{bmatrix}
  \dot{q} \\ \dot{p}
 \end{bmatrix}
&=&\underbrace{\begin{bmatrix}
 \mathbf{0}_{5\times 5} & I_{5} \\ -I_{5} & -R_{2}  
   \end{bmatrix}}_{F}\underbrace{\begin{bmatrix}
                 K & \mathbf{0}_{5\times 5} \\\mathbf{0}_{5\times 5} & M^{-1}
                \end{bmatrix}}_{H}\begin{bmatrix}
                                         q \\ p
                                        \end{bmatrix}
+ \underbrace{\begin{bmatrix}
\mathbf{0}_{5} \\ G                 \end{bmatrix}
}_{B}u \\[1cm] G&=&\begin{bmatrix}
                   1 \\ \mathbf{0}_{4}
                   \end{bmatrix}, \; \; M=\diag\{m_{1},m_{2},m_{3},m_{4},m_{5}\}
\\[.5cm]                                                                                        K&=&\begin{bmatrix}
    k_1 & -k_1 & 0 & 0 & 0\\ -k_1 & k_1+k_2 & -k_2 & 0 & 0\\ 0 & -k_2 & k_2+k_3 & -k_3 & 0\\  0 & 0 & -k_3 & k_3+k_4 & -k_4\\ 0 & 0 & 0 & -k_4 & k_4+k_5                                                              
 \end{bmatrix},\\ [1cm] 
%  M&=&\diag\{m_{1},m_{2},m_{3},m_{4},m_{5}\}\\[.5cm]
R_2&=&\begin{bmatrix}
       0 & 0 & 0 & 0 & 0\\ 0 & b_2 & -b_2 & 0 & 0\\ 0 & -b_2 & b_2+b_3 & -b_3 & 0\\
    0 & 0 & -b_3 & b_3+b_4 & -b_4\\ 0 & 0 & 0 & -b_4 & b_4
      \end{bmatrix},
 \end{array}\label{mec}
\end{equation} 
where $q,p\in\rea^{5}$, which is in PH form.

\begin{figure*}
 \centering
 \captionsetup{justification=centering}
 \includegraphics[width=\textwidth]{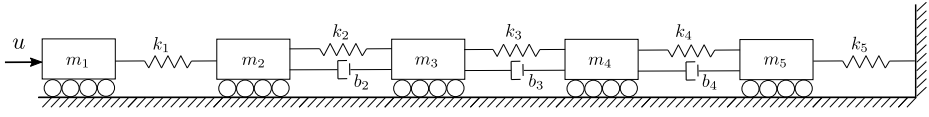}
 \vspace{.2cm}
  \caption{Mass-spring-damper network}
 \label{fig:mechanical}
\end{figure*}

The objective is to reduce the order of the model and ensure that the PH structure is preserved.  
Note that, independently of $\delta>0$, this system does not satisfy condition \eqref{cond}, and thus the Hamiltonian matrix cannot be proposed as a generalized Gramian.
At this point, we remark that this system is neither controllable nor observable, but Assumption \ref{A1} holds.

\begin{table}[h]
\caption{\\Parameters of the mechanical system}
 \begin{multicols}{3}
\begin{center}
\begin{tabular}{|l|l|} \hline
$b_{2}$ & $50 [kg/s]$\\ \hline
$b_{3}$ & $20 [kg/s]$\\ \hline
$b_{4}$ & $5 [kg/s]$ \\ \hline
\end{tabular}
\end{center}
\begin{center}
\begin{tabular}{|l|l|} \hline
$m_{1}$ & $1.5 [kg]$\\ \hline
$m_{2}$ & $0.5 [kg]$\\ \hline
$m_{3}$ & $4 [kg]$\\ \hline
$m_{4}$ & $2 [kg]$ \\ \hline
$m_{5}$ & $1.25 [kg]$\\ \hline
\end{tabular}
\end{center}
\begin{center}
\begin{tabular}{|l|l|} \hline
$k_{1}$ & $4 [kg/s^{2}]$\\ \hline
$k_{2}$ & $7 [kg/s^{2}]$\\ \hline
$k_{3}$ & $2 [kg/s^{2}]$\\ \hline
$k_{4}$ & $5 [kg/s^{2}]$ \\ \hline
$k_{5}$ & $3 [kg/s^{2}]$\\ \hline
\end{tabular}
\end{center}
\end{multicols}
\label{T:mec}
\end{table}
Based on the results presented in Section \ref{sec:gbph}, we first adopt the generalized balanced truncation approach. To this end, we look for a solution $\breve P$ to the inequality \eqref{gcg} such that $X_{c}>0$. Hence, if the conditions established in Proposition \ref{gobs} are satisfied, then we propose $W=W_{gc}$.
In order to reduce the order of system \eqref{mec} via generalized balanced truncation, we proceed as follows:
\begin{itemize}
 \item We propose a positive definite matrix $\breve X_{c}\in\rea^{n}$ to write the inequality \eqref{gcg} as an equality, that is,
 \begin{equation}
 -BB^{\top}-\breve X_{c}=FH\breve P+\breve P H F^{\top}, \label{lyapmec}
\end{equation} 
\item We find $\breve P$ that solves \eqref{lyapmec}.
\item We look for a diagonal matrix that solves \eqref{cexgo}. If such diagonal matrix exists, then we propose $W=W_{gc}$, with $W_{gc}$ defined in \eqref{Wc}. Notice that the singular values of the system are contained in $\Lambda_{QP}$.
\item We truncate the system and we obtain the reduced order model.
\end{itemize}
For illustration purposes, we consider the values given in Table \ref{T:mec}, and we fix $\breve X_{c}=I_{10}\times 10^{-5}$. Hence, using Matlab, we find $\breve P$ that solves \eqref{lyapmec} which is given by \eqref{Ptmec}, see the Appendix. 
Moreover, we use Matlab to solve the inequality
\eqref{cexgo}, obtaining the solution 
\begin{equation}
 \begin{array}{rcl}
  \Lambda_{QP}&=&\diag \{4.374, 4.316, 2.755, 2.564,\\&& 1.188, 0.626, 0.482,  0.324, 0.155, 0.070\}. 
 \end{array}
\end{equation} 
Therefore, it follows from Proposition \ref{gobs} that the reduced order model preserves the PH structure.\\[0.2cm]
Now, for the sake of comparison we study the extended balancing case. Towards this end, we proceed as follows:

\begin{itemize}
 \item We consider the matrices $\breve X_{c}$ and $\breve P$ used during the generalized balancing procedure.
 \item We propose $\Gamma_{c}$ and $\beta$ such that $T$, given in \eqref{Tsel}, solves \eqref{LMIcontr}.
 \item We look for a diagonal matrix $\Lambda_{ST}$ that solves \eqref{cexeo}. If such matrix exists, then we select $S$ as in \eqref{Se},
with $Q$ given in \eqref{Qe}.
\item We truncate the system and we obtain the reduced order model.
 \end{itemize}

To illustrate the methodology, we replace $T=(\beta \breve P^{-1}+\Gamma_c)^{-1}$ in \eqref{LMIcontr}. Hence, using Matlab, we solve this equation for $\beta$ and a symmetric
matrix $\Gamma_{c}$. As a result, we obtain 
\begin{equation*}
 \beta=4.8021\times 10^{7}
\end{equation*}
and the matrix given in \eqref{Gamcmec}, see the Appendix. Then, we fix $\alpha=\beta$ and we look for a solution $\Lambda_{QT}$ to the inequality \eqref{cexeo}.
Such a diagonal matrix is obtained by Matlab's
LMI solver. Finally, we fix $\Lambda_{ST}=\frac{1}{\sqrt{\alpha}}\Lambda_{QT}$ to obtain
\begin{equation}
 \begin{array}{rcl}
  \Lambda_{ST}&=&\diag \{3.71, 3.666, 2.415, 2.218, 0.976,\\&& 0.543,  0.401, 0.245, 0.099, 0.041\}.
 \end{array}
\end{equation}
Thus, it follows from Proposition \ref{prop:ex1} that the reduced order model preserves the PH structure. In this example, we chose $\Gamma_{c}$ based on the value of the four smallest entries of $\Lambda_{ST}$. We tuned $\Gamma_{c}$ by trial and error for illustration purposes. However, to improve the results, this matrix can be computed by solving an optimization problem.\\[0.2cm]
In order to compare the error bounds of both balancing approaches, we truncate four states of the original system, that is $k=6$. Accordingly, for the generalized balancing case we get
\begin{equation}
 \lVert \varSigma - \varSigma_{r} \rVert_{\infty}\leq 2.06, \label{emecg}
\end{equation}
and for extended balancing we have 
\begin{equation}
 \lVert \varSigma - \varSigma_{r} \rVert_{\infty}\leq 1.57. \label{emece}
\end{equation} 
To compare the behavior of system \eqref{mec} and both reduced order systems, obtained via generalized balanced truncation and extended
balanced truncation, we perform simulations under initial conditions $\bar x=\mathbf{0}_{10}$ for the balanced system, $\hat x=\mathbf{0}_{6}$ for both reduced order systems, 
and the input signal depicted in Fig. \ref{fig:u}. Figure \ref{fig:yg_mec} shows the comparison between the outputs of the balanced system and the reduced order system obtained via 
generalized balanced truncation. Analogously in Fig. \ref{fig:ye_mec}, we plot the outputs of the balanced system and the system obtained through extended balanced truncation. 
From Figs. \ref{fig:yg_mec} and \ref{fig:ye_mec} we notice that the output of the balanced system, plotted in black, and the outputs of the reduced order systems, depicted in red,
significantly similar. Thus we conclude that, with both balanced truncation approaches, we preserve the PH structure of the original system, and the response of the reduced order systems 
to a given input is similar to the response of the original system.\\[0.2cm]
We remark that
while the PH structure is preserved for generalized and extended balanced truncation, the comparison between \eqref{emecg} and \eqref{emece} shows that the error bound
obtained from the latter balancing method is smaller. However, in the simulations we performed, for the real error, the smaller error bound did not make much difference. 
This situation can be observed in Fig. \ref{fig:error}, where we present in black the difference between the output of the balanced system and the output of the reduced order system
obtained through generalized balanced truncation, and in red we plot the difference between the output of the balanced system and the output of the reduced order system
obtained through extended balanced truncation.

\begin{figure}[h]
 \centering
 \includegraphics[scale=.34]{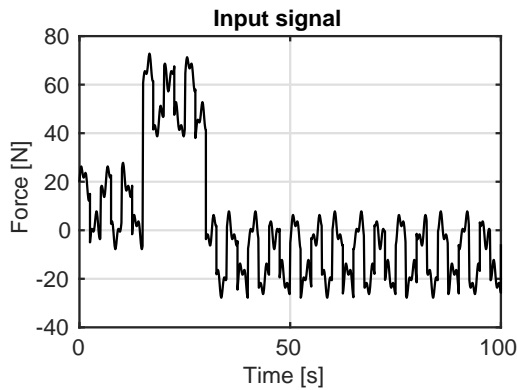}
 % umec.eps: 0x0 px, 300dpi, 0.00x0.00 cm, bb=
 \caption{Signal $u$.}
 \label{fig:u}
\end{figure}
\begin{figure}[h]
 \centering
 \includegraphics[scale=.34]{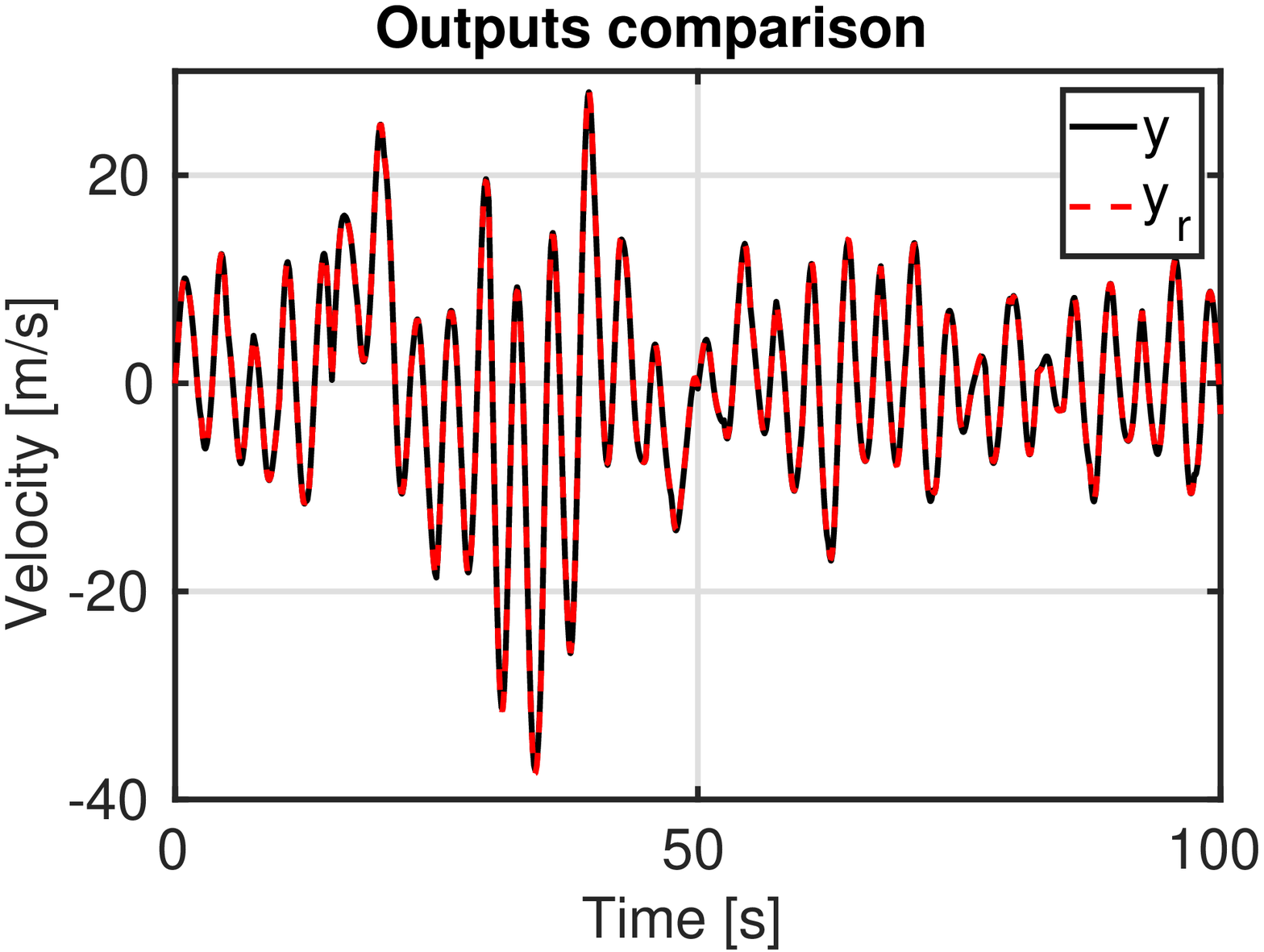}
 % ymec.eps: 0x0 px, 300dpi, 0.00x0.00 cm, bb=
 \caption{Outputs of the balanced system and the reduced order system obtained via generalized balanced truncation.}
 \label{fig:yg_mec}
\end{figure}
\begin{figure}[h]
 \centering
 \includegraphics[scale=.34]{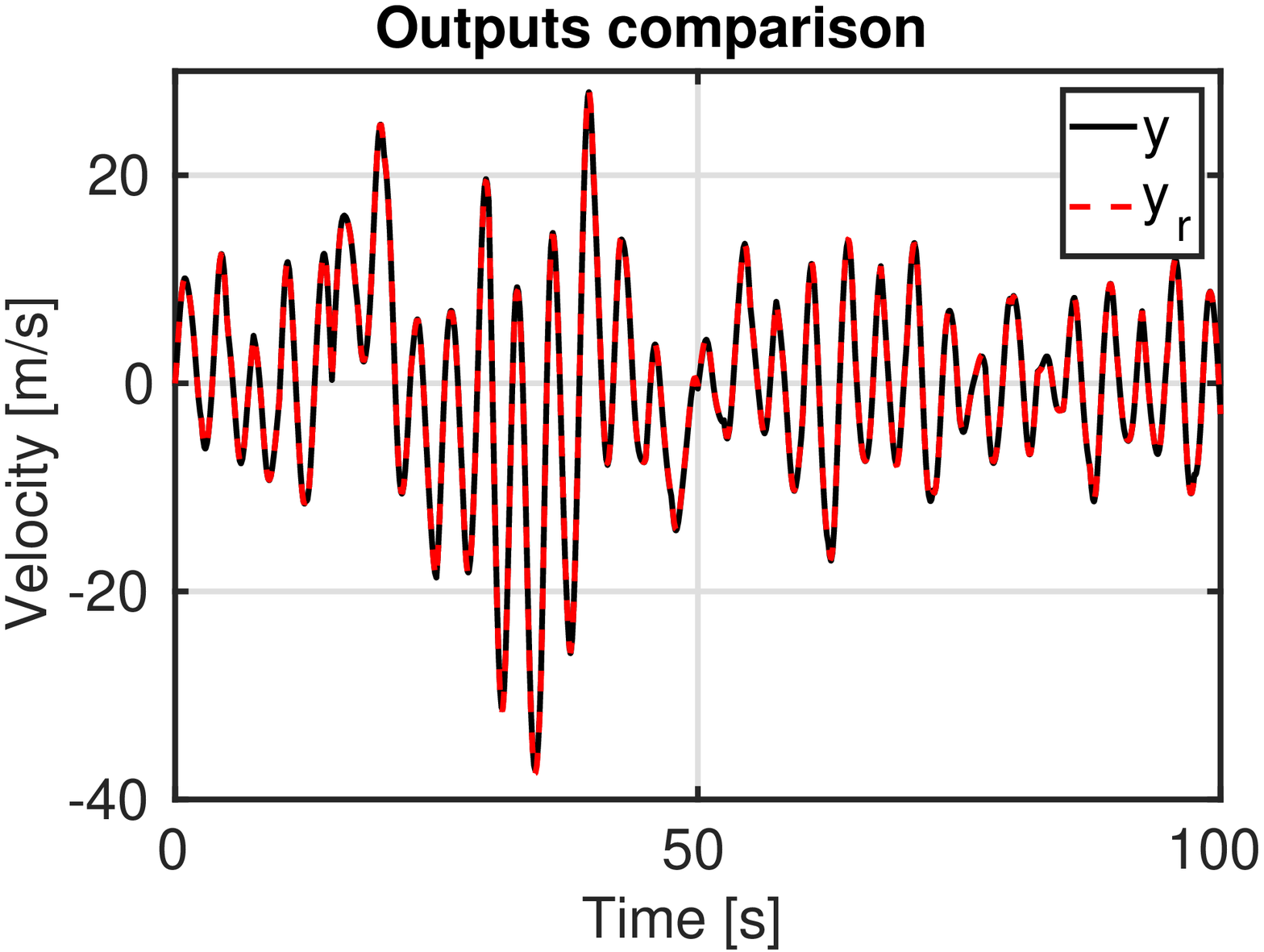}
 % ymec.eps: 0x0 px, 300dpi, 0.00x0.00 cm, bb=
 \caption{Outputs of the balanced system and the reduced order system obtained via generalized balanced truncation.}
 \label{fig:ye_mec}
\end{figure}
\begin{figure}
 \centering
 \includegraphics[scale=.34]{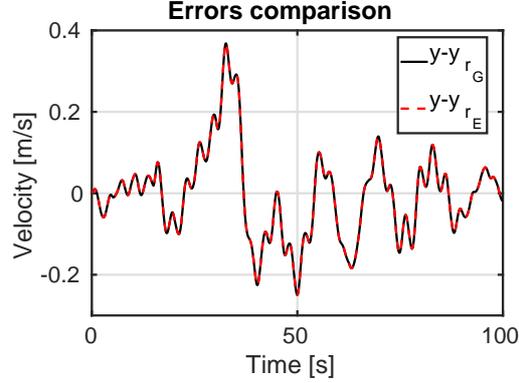}
 % umec.eps: 0x0 px, 300dpi, 0.00x0.00 cm, bb=
 \caption{Error between the outputs of the balanced system and the reduced order systems.}
 \label{fig:error}
\end{figure}

\subsection{RLC circuit}

\begin{figure*}
 \centering
 \captionsetup{justification=centering}
 \includegraphics[width=\textwidth]{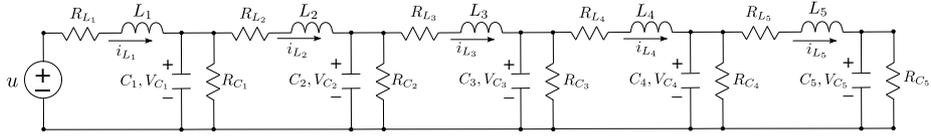}
 \vspace{.2cm}
  \caption{RLC network}
 \label{fig:RLC}
\end{figure*}

Consider the RLC network depicted in Fig. \ref{fig:RLC} which admits a PH representation of the form \eqref{sys} with

\begin{equation}
\begin{array}{rcl}
 J&=&\begin{bmatrix}
    \mathbf{0}_{5\times 5} & J_{1} \\ -J_{1}^{\top} & \mathbf{0}_{5\times 5}
   \end{bmatrix}, \; \;
   R= \begin{bmatrix}
         R_{C}^{-1} & \mathbf{0}_{5\times 5} \\ \mathbf{0}_{5\times 5} &R_{L}
        \end{bmatrix}, \; \;\\[.3cm]
        H&=&\diag\{ \frac{1}{C_{1}}, \frac{1}{C_{2}}, \frac{1}{C_{3}}, \frac{1}{C_{4}}, \frac{1}{C_{5}} ,
        \frac{1}{L_{1}}, \frac{1}{L_{2}}, \frac{1}{L_{3}}, \frac{1}{L_{4}}, \frac{1}{L_{5}}\}, \\[0.2cm]
  R_{C}&=& \diag\{ R_{C_{1}}, R_{C_{2}},R_{C_{3}}, R_{C_{4}}, R_{C_{5}} \},\\[.2cm]
 R_{L}&=& \diag\{ R_{L_{1}}, R_{L_{2}},R_{L_{3}}, R_{L_{4}}, R_{L_{5}} \},\\[.2cm]
 J_{1}&=& \begin{bmatrix}
           1 & -1 & 0 & 0 & 0 \\ 0 & 1 & -1 & 0 & 0 \\ 0 & 0 & 1 & -1 & 0 \\ 0 & 0 & 0 & 1 &  -1 \\ 0 & 0 & 0 & 0 & 1
          \end{bmatrix}, \; \; B= \begin{bmatrix}
        \mathbf{0}_{5} \\ 1 \\ \mathbf{0}4
       \end{bmatrix},
       \end{array}
\end{equation} 

where $x_{i}$ are the charges in the capacitors and $x_{5+i}$ denote the fluxes in the inductors, for $i=1,\cdots,5$.

\begin{table}[h]
\caption{\\Parameters of the RLC network}
\begin{multicols}{2}
\begin{center}
\begin{tabular}{|c|c|}\hline
$R_{C_{1}}$ & $270[\Omega]$\\\hline
$R_{C_{2}}$ & $1[k\Omega]$\\\hline
$R_{C_{3}}$ & $330[\Omega]$\\\hline
$R_{C_{4}}$ & $1.5[k\Omega]$\\\hline
$R_{C_{5}}$ & $220[\Omega]$\\\hline
$R_{L_{1}}$ & $4.7[\Omega]$\\\hline
$R_{L_{2}}$ & $3.9[\Omega]$\\\hline
$R_{L_{3}}$ & $2.2[\Omega]$\\\hline
$R_{L_{4}}$ & $2.74[\Omega]$\\\hline
$R_{L_{5}}$ & $3.92[\Omega]$\\\hline
\end{tabular}
\end{center}
\columnbreak
\begin{center}
\begin{tabular}{|c|c|}\hline
$C_{1}$ & $2.2[m F]$\\\hline
$C_{2}$ & $1[m F]$\\\hline
$C_{3}$ & $3.3[m F]$\\\hline
$C_{4}$ & $15[\mu F]$\\\hline
$C_{5}$ & $4.7[\mu F]$\\\hline
$L_{1}$ & $10[m H]$\\\hline
$L_{2}$ & $4.3[m H]$\\\hline
$L_{3}$ & $2.7[m H]$\\\hline
$L_{4}$ & $6.2[\mu H]$\\\hline
$L_{5}$ & $3[\mu H]$\\\hline
\end{tabular}
\end{center}
\end{multicols}
\label{T:rlc}
\end{table}
The objective is to reduce the order of the model and obtain a PH system that has a physical interpretation as an RLC circuit. Accordingly, we require that the reduced PH system has a diagonal damping matrix, and the interconnection matrix must be skew-symmetric and block anti-diagonal, which is more particular than the standard PH structure given in \eqref{sys}.
We stress the fact that the matrices $J,\ R$, and $H$ can be decomposed in block matrices whose dimension depend on the number of inductors and capacitors, in this case 5. Moreover,
$H$ is already diagonal. Thus, a block diagonal transformation\footnote{Where the dimension of the blocks is again related to the number of capacitors and inductors.} $W$ ensures that $\bar H$ remains
diagonal, and the block structure that determines the RLC architecture of the system is not affected.\\[0.2cm]
Note that the damping matrix $R$ has full rank. Hence, we can select
\begin{equation}
 \begin{array}{rl}
  Q=\delta_{o}H, & \breve P=\delta_{c}H^{-1}, \label{PRLC}
 \end{array}
\end{equation}
where $\delta_{o}$ and $\delta_{c}$ are positive constants such that \eqref{cond} holds. Therefore, both generalized Gramians are diagonal and the resulting transformation $W_{g}$ will not modify the structure of the 
original system. Nevertheless, in such case, the Hankel singular values are given by
\begin{equation*}
 \Lambda_{QP}=\sqrt{\delta_{o}\delta_{c}}I_{n}.
\end{equation*} 
Since all the entries the matrix $\Lambda_{QP}$ are equal, the criterion of truncating the states related to the smallest singular values is impractical and further information is required to decide which states can be removed. To deal
with this situation, we proceed as follows:
\begin{itemize}
 \item We fix $\breve P$ as in \eqref{PRLC}.
 \item We propose $\beta$ and a diagonal matrix $\Gamma_c$ sucht that $T$, defined in \eqref{Tsel}, solves the LMI \eqref{LMIcontr}.
 \item We look for a matrix $\Lambda_{QT}$ that solves the inequality \eqref{cexeo}. We stress the fact that, in this case
 $Q$, given by \eqref{Qe}, is a diagonal matrix. 
 \item We fix $\alpha=\beta$ and we look for a diagonal matrix $\Gamma_{o}$ such that $S$, given by \eqref{symS}, solves the LMI \eqref{LMIobs}.
 \item We find a transformation that balances the system. Then, we truncate the system to obtain the reduced order model.
\end{itemize}

To illustrate the methodology, we consider the values in Table \ref{T:rlc}. Then, we propose\footnote{A large $\delta_{c}$ is translated in large values of the entries of $\breve P$ which can,
potentially, produce large singular values.} $\delta_{c}=0.11$ in \eqref{PRLC}. Hence, the design parameters
\begin{equation}
 \begin{array}{rcl}
  \Gamma_{c}&=&-\diag \{14, 4.9, 3.7, 0, 0, 190,  600, 350, 3.9, 10\}\\
  \beta&=&\alpha=5\times 10^{8}
 \end{array}
\end{equation} 
ensure that
\begin{equation}
 \begin{array}{rcl}
 T&=&\diag \{0.08, 0.18, 0.06, 121.21, 38.68,\\&& 0.02,  0.04, 0.07, 29.66, 64.52\}\times10^{-4}, \label{TRLC}
 \end{array}
\end{equation} 
solves \eqref{LMIcontr}. Now, using Matlab, we solve the inequality \eqref{cexeo} to obtain 
 \begin{equation}
 \begin{array}{rcl}
 \Lambda_{QT}&=&\diag \{5.89, 5.85, 6.23, 6.56, 6.83,\\&& 6.93,  6.5, 6.63, 5.84, 5.61\}\times10^{3}\\[.2cm]
  Q&=&\diag \{0.39, 0.78, 0.21, 414.89, 134.25,\\&&   0.09, 0.19, 0.28, 101.3, 202.93\}\times10^{3}.
 \end{array}\label{Qrlc}
\end{equation}
Moreover, we propose
\begin{equation}
 \begin{array}{rcl}
  \Gamma_{o}&=&\diag \{0, 0, 0, 0.2, 0.1, 0,  0, 0, 1, 5\}\times10^{12}. \label{gamorlc}
 \end{array}
\end{equation} 
Hence, we replace $Q$, given in \eqref{Qrlc}, and \eqref{gamorlc} in \eqref{symS}. Accordingly,
\begin{equation}
 \begin{array}{rcl}
  S&=&\diag \{0.08, 0.16, 0.04, 82.9, 26.81,\\&& 0.02,  0.04, 0.06, 19.87, 38.68\}\times10^{-5}, \label{SRLC}
 \end{array}
\end{equation} 
which solves the LMI \eqref{LMIobs}.\\
Note that 
\begin{equation*}
 H T^{-1}S = S T^{-1}H.
\end{equation*} 
Hence, it follows from Theorem \ref{th:diag} that there exist a transformation $W$ that balances the system and preserve the PH structure. Moreover,
the matrices $H, T$ and $S$ are diagonal. As a result, $W$ is a block diagonal matrix, thus, we can express
 the matrices $W$ and $\Lambda_{ST}$ as follows
\begin{equation}
 \begin{array}{rcl}
  W&=&\diag\{W_{{1}},  W_{{2}}\} \\[.2cm]
 \Lambda_{ST}&=&\diag\{\Lambda_{ST_{1}}, \Lambda_{ST_{2}}\}\\[.2cm]
 \Lambda_{ST_{i}}&=&\diag\{\sigma_{i_{1}},\cdots,\sigma_{i_{5}}\}, \; i=1,2,
 \end{array}
\end{equation} 
where  
\begin{equation}
 \begin{array}{rcl}
 W_{1}&=&\begin{bmatrix}
        629.3 & 0 & 0 & 0 & 0 \\
        0 & 433 & 0 & 0 & 0 \\ 
        0 & 0 & 807.2 & 0 & 0 \\
        0 & 0 & 0 & 0 & 17.8 \\ 
        0 & 0 & 0 & 31.3 & 0
       \end{bmatrix} \\[1cm]
 W_{2}&=& \diag\{1332, 892.1,714.2,36.1,25.2\}\\[0.1cm]
  \Lambda_{ST_{1}}&=&\diag\{0.31, 0.29, 0.28, 0.26, 0.26\} \\[0.1cm]
  \Lambda_{ST_{2}}&=&\diag\{0.31,0.3,0.29,0.26,0.24\}.
 \end{array}
\end{equation} 
The criterion to choose the parameters $\Gamma_{c}, \Gamma_{o}$, and $\beta$ differs from the example studied in Section \ref{sec:mecex}. In this case, we want to have a significant contrast in the entries of $\Lambda_{ST}$ to have information about which states can be truncated without affecting the response of the reduced-order system significantly. The mentioned parameters were selected by trial and error for the sake of illustration, but these might be computed by solving an optimization problem.\\[0.2cm]
At this point, we make three observations regarding the preservation of the RLC structure:
\begin{itemize}
 \item [(i)] As mentioned before, to preserve the RLC structure it is necessary to ensure that $W$ is a block diagonal matrix.
 \item [(ii)] We are truncating the states related to
the entries of $\Lambda_{ST}$ in pairs, that is, one state related to one element of $\Lambda_{ST_{1}}$ and one state related to one entry from $\Lambda_{ST_{2}}$. 
The physical interpretation of this condition is that we are removing the same number of inductors and capacitors.
\item [(iii)] By fixing $\Gamma_o$ and $\Gamma_c$ different from zero, we ensure that the entries of $\Lambda_{ST}$ are different. Then, we can apply the criterion of truncating the states 
related to the smallest entries of each submatrix $\Lambda_{ST_{i}}$.
\end{itemize}
For illustration purposes, we truncate the states related to $\sigma_{i_{4}},\sigma_{i_{5}}$. In such a case, the reduced order model admits a PH representation with 
\begin{equation}
 \begin{array}{rcl}
  R_{r}&=&\diag\{R_{C_{r}}^{-1}, R_{L_{r}} \}, \; \; J_{r}=\begin{bmatrix}
           \mathbf{0}_{3\times 3} & J_{1_{r}} \\ J_{1_{r}}^{\top} &
          \mathbf{0}_{3\times 3}
          \end{bmatrix},\\[0.3cm]
H_{r}&=&\diag\left\lbrace \frac{1}{C_{1_{r}}}, \frac{1}{C_{2_{r}}},\frac{1}{C_{3_{r}}}, \frac{1}{L_{1_{r}}}, \frac{1}{L_{2_{r}}}, \frac{1}{L_{3_{r}}}\right\rbrace,\\[0.2cm]
R_{C_{r}}&:=&\diag\left\lbrace R_{C_{1_{r}}}, R_{C_{2_{r}}}, R_{C_{3_{r}}}  \right\rbrace, \\[0.2cm]
R_{L_{r}}&:=&\diag\left\lbrace R_{L_{1_{r}}}, R_{L_{2_{r}}}, R_{L_{3_{r}}}  \right\rbrace, \\[0.2cm]
J_{1_{r}}&=&\begin{bmatrix}
             1 & -\gamma_{2} & 0 \\ 0 & 1 & -\gamma_{3} \\
             0 & 0 & 1
            \end{bmatrix}, \; \; B_{r}=\begin{bmatrix}
         \mathbf{0}_{3}\\ \gamma_{1} \\ \mathbf{0}_{2}
        \end{bmatrix}
\end{array}
\end{equation} 
and the values given in Table \ref{T:redrlc}. The error bound is given by

\begin{equation}
 \lVert \varSigma - \varSigma_{r} \rVert_{\infty}\leq 2.06,
\end{equation} 

and the reduced order model admits the RLC realization depicted in Fig. \ref{fig:realization}, where the states $\hat{x}_{i}$ represent the charges in the capacitors 
and $\hat{x}_{i+3}$ denote the fluxes in the inductors for $i=1,2,3$.

\begin{figure*}
 \centering
 \captionsetup{justification=centering}
 \includegraphics[width=\textwidth]{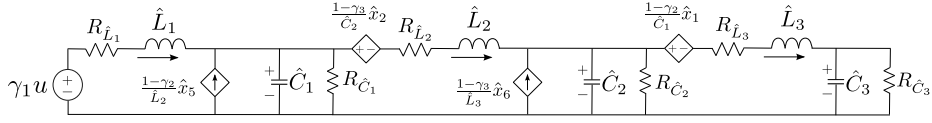}
 \vspace{.2cm}
  \caption{Reduced RLC network}
 \label{fig:realization}
\end{figure*}

\begin{table}[h]
\caption{\\Parameters of the reduced RLC circuit}
 \begin{multicols*}{3}
%\begin{center}
\begin{tabular}{|l|l|} \hline
$\gamma_{1}$ & $0.69\times10^{-4}$\\ \hline
$\gamma_{2}$ & $1.01$\\ \hline
$\gamma_{3}$ & $1.53$ \\ \hline
\end{tabular}
%\end{center}
%\begin{center}
\begin{tabular}{|l|l|} \hline
$R_{C_{1_{r}}}$ & $127.55$\\ \hline
$R_{C_{2_{r}}}$ & $485.34$\\ \hline
$R_{C_{3_{r}}}$ & $373.01$\\ \hline
$R_{L_{1_{r}}}$ & $2.22$ \\ \hline
$R_{L_{2_{r}}}$ & $1.89$\\ \hline
$R_{L_{3_{r}}}$ & $2.49$ \\\hline
\end{tabular}
%\end{center}
%\begin{center}
\begin{tabular}{|l|l|} \hline
$C_{1_{r}}$ & $4.66\times10^{-3}$\\ \hline
$C_{2_{r}}$ & $2.06\times10^{-3}$\\ \hline
$C_{3_{r}}$ & $2.92\times10^{-3}$\\ \hline
$L_{1_{r}}$ & $4.72\times10^{-3}$ \\ \hline
$L_{2_{r}}$ & $2.09\times10^{-3}$\\ \hline
$L_{3_{r}}$ & $3.05\times10^{-3}$ \\\hline
\end{tabular}
%\end{center}
\end{multicols*}
\label{T:redrlc}
\end{table}

{\bf Simulation results}\\
We carry out simulations to compare the behavior of the original system with:
\begin{itemize}
 \item A reduced order system obtained via generalized balancing, where the generalized Gramians are chosen as in \eqref{PRLC} and $\delta_{o}=\delta_{c}$.
 \item The reduced order system obtained through extended balanced truncation, with Gramians \eqref{TRLC} and \eqref{SRLC}.
\end{itemize}
A first set of simulations is performed considering that the systems start at rest and the input signal depicted in Fig. \ref{fig:u1rlc}. Figure \ref{fig:ys1rlc} shows
the outputs of the systems, where $y$ is the output of the balanced system, $y_{G}$ represents the output of the system obtained via generalized balancing, 
and $y_{E}$ corresponds to the output of the system obtained via extended balanced truncation. In Fig. \ref{fig:ys1rlc} we can observe that the difference between $y_{E}$ and $y$ 
is rather small, while the output $y_{G}$ is considerably different from the output of the balanced system, this can be the result of truncating states without any justification in 
the generalized balanced truncation case. Figure \ref{fig:eg1} shows the plot of the difference $y-y_{G}$ and Fig. \ref{fig:ee1} depicts the difference $y-y_{E}$, if we compare both plots,
we corroborate that---note that the scales of the plots are different---the error $y-y_{G}$ is noticeably bigger than the error $y-y_{E}$.\\[0.2cm]
A second set of simulations is carried out considering the input shown in Fig. \ref{fig:u2rlc} and the systems starting at rest. Figure \ref{fig:ys2rlc} shows the outputs of the systems, where it is clear that the output
$y_{G}$, plotted in blue, is totally different from the output of the balanced system. As we discussed above, the reason for this difference is the lack of a criterion to truncate
the states in the generalized balancing approach. On the other hand, we observe in Fig. \ref{fig:ys2rlc} that the plot of $y_{E}$ approximates the behavior of $y$. Hence we conclude that,
in this example, the matrices $\Gamma_c$ and $\Gamma_o$ can be exploited to reduce the error bound while preserving the physical interpretation of the original system.
% The results of these simulations are shown in ,\ref{fig:eg2} and \ref{fig:ee2}. 
% In both set of simulations, it is clear that the magnitude of the error is much smaller using extended balanced truncation approach.
\begin{figure}
 \centering
 \includegraphics[scale=.34]{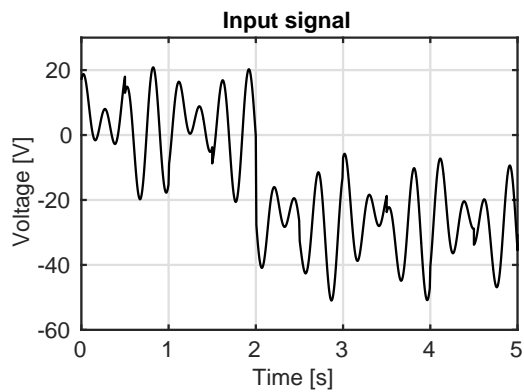}
 % umec.eps: 0x0 px, 300dpi, 0.00x0.00 cm, bb=
 \caption{Input signal $u$.}
 \label{fig:u1rlc}
\end{figure}
\begin{figure}
 \centering
 \includegraphics[scale=.34]{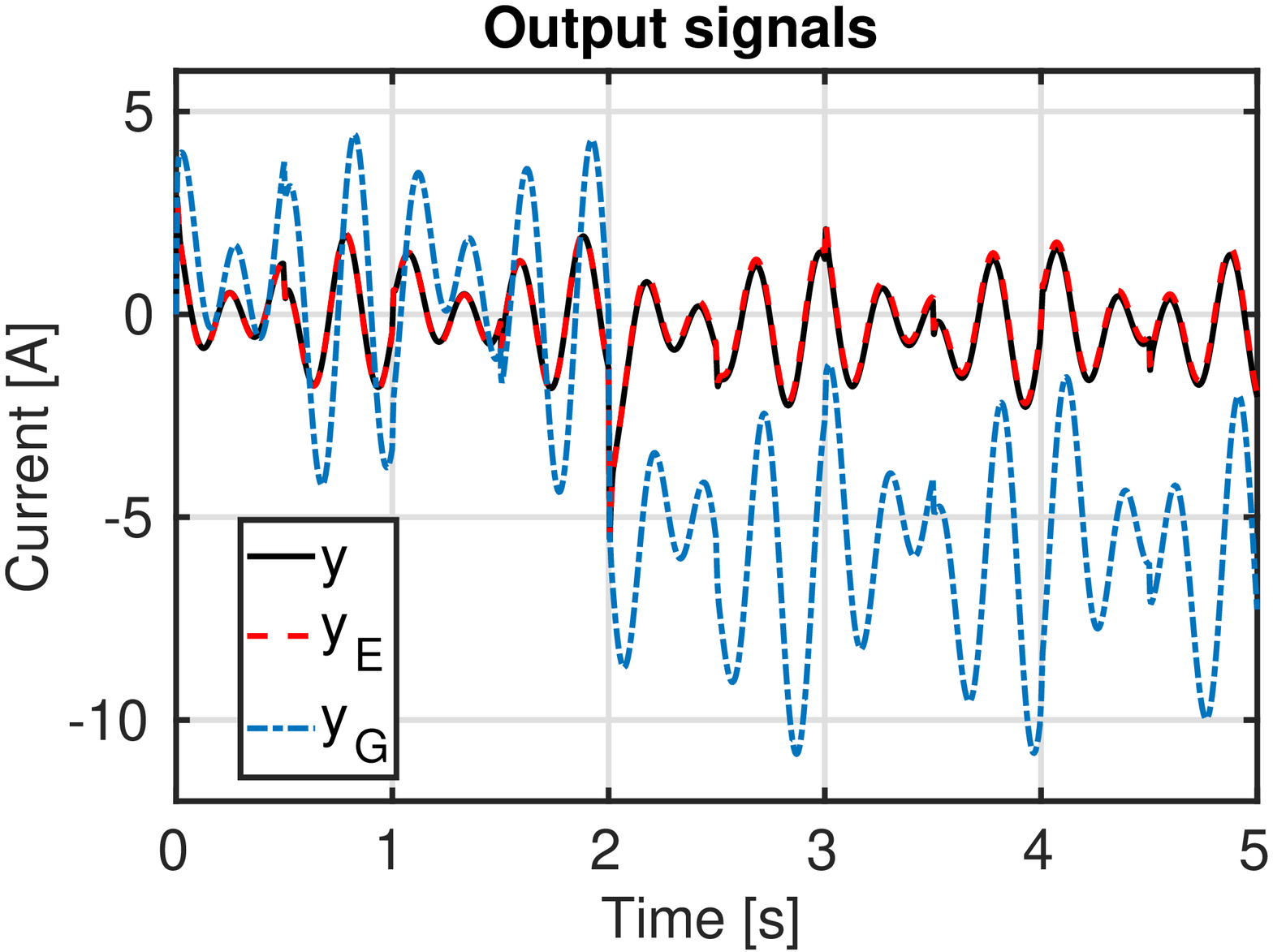}
 % ymec.eps: 0x0 px, 300dpi, 0.00x0.00 cm, bb=
 \caption{Plot of the different outputs.}
 \label{fig:ys1rlc}
\end{figure}
\begin{figure}
 \centering
 \includegraphics[scale=.34]{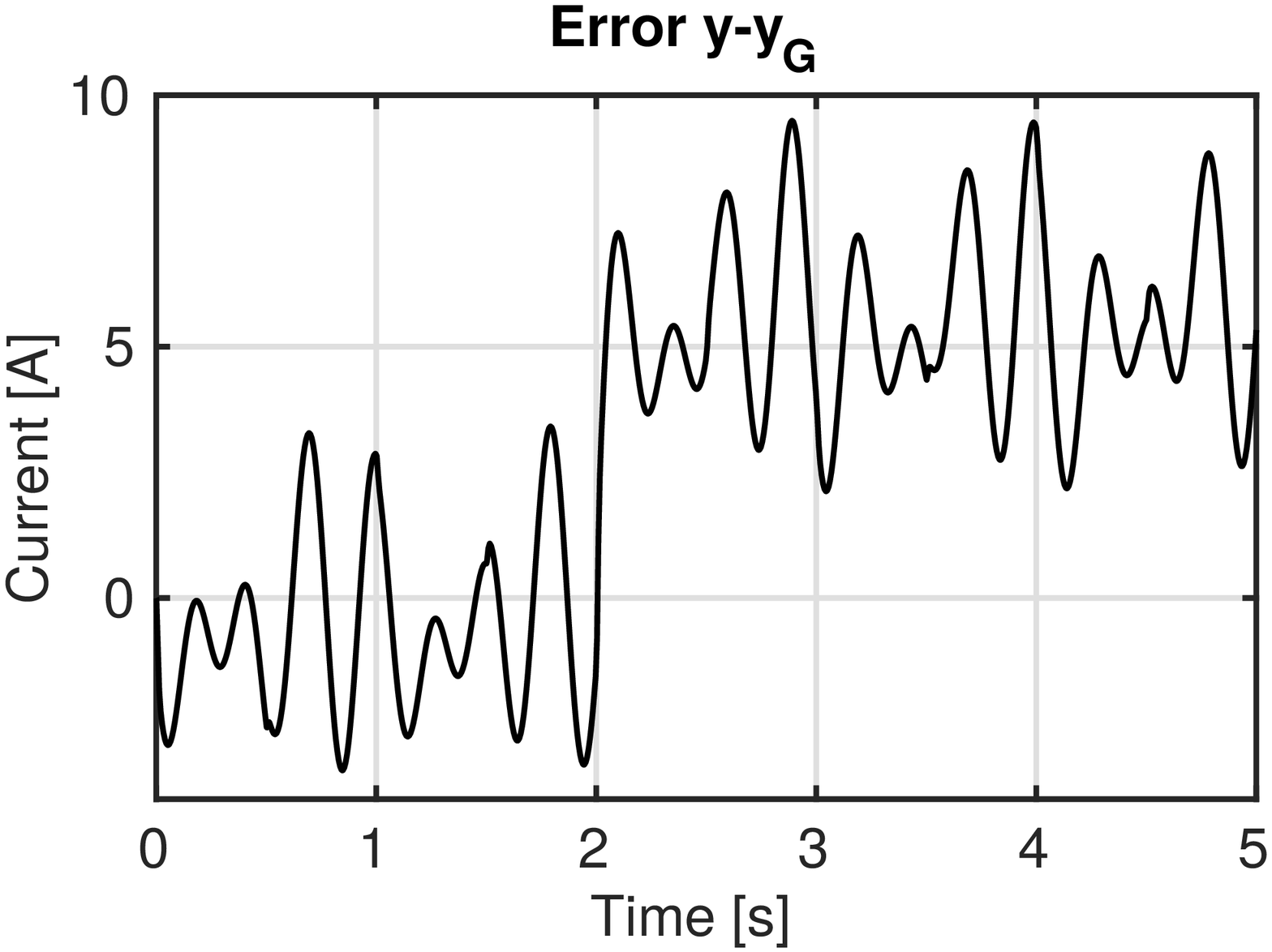}
 % umec.eps: 0x0 px, 300dpi, 0.00x0.00 cm, bb=
 \caption{Plot of the error $y-y_{G}$.}
 \label{fig:eg1}
\end{figure}
\begin{figure}
 \centering
 \includegraphics[scale=.34]{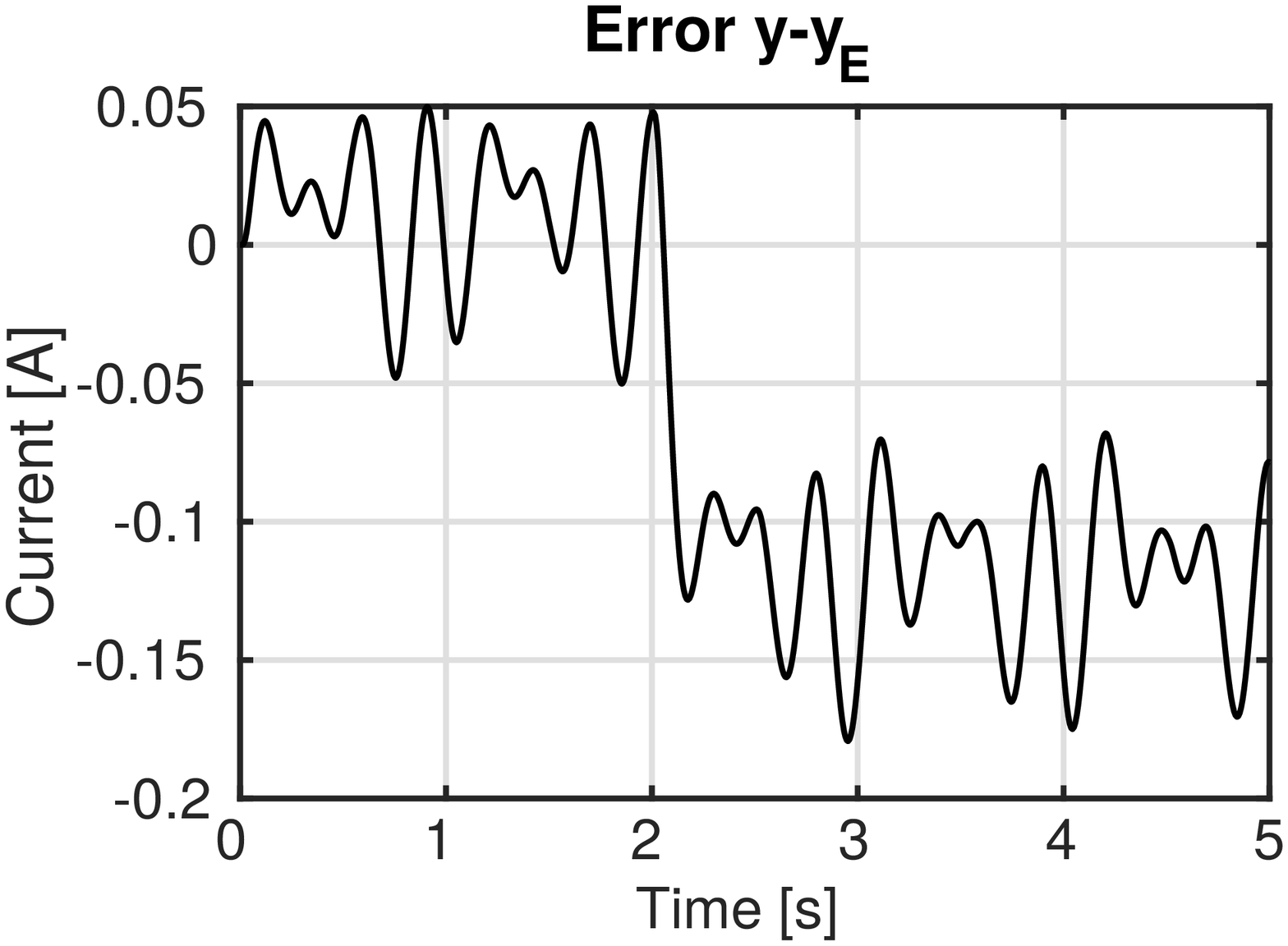}
 % umec.eps: 0x0 px, 300dpi, 0.00x0.00 cm, bb=
 \caption{Plot of the error $y-y_{E}$.}
 \label{fig:ee1}
\end{figure}

\begin{figure}
 \centering
 \includegraphics[scale=.34]{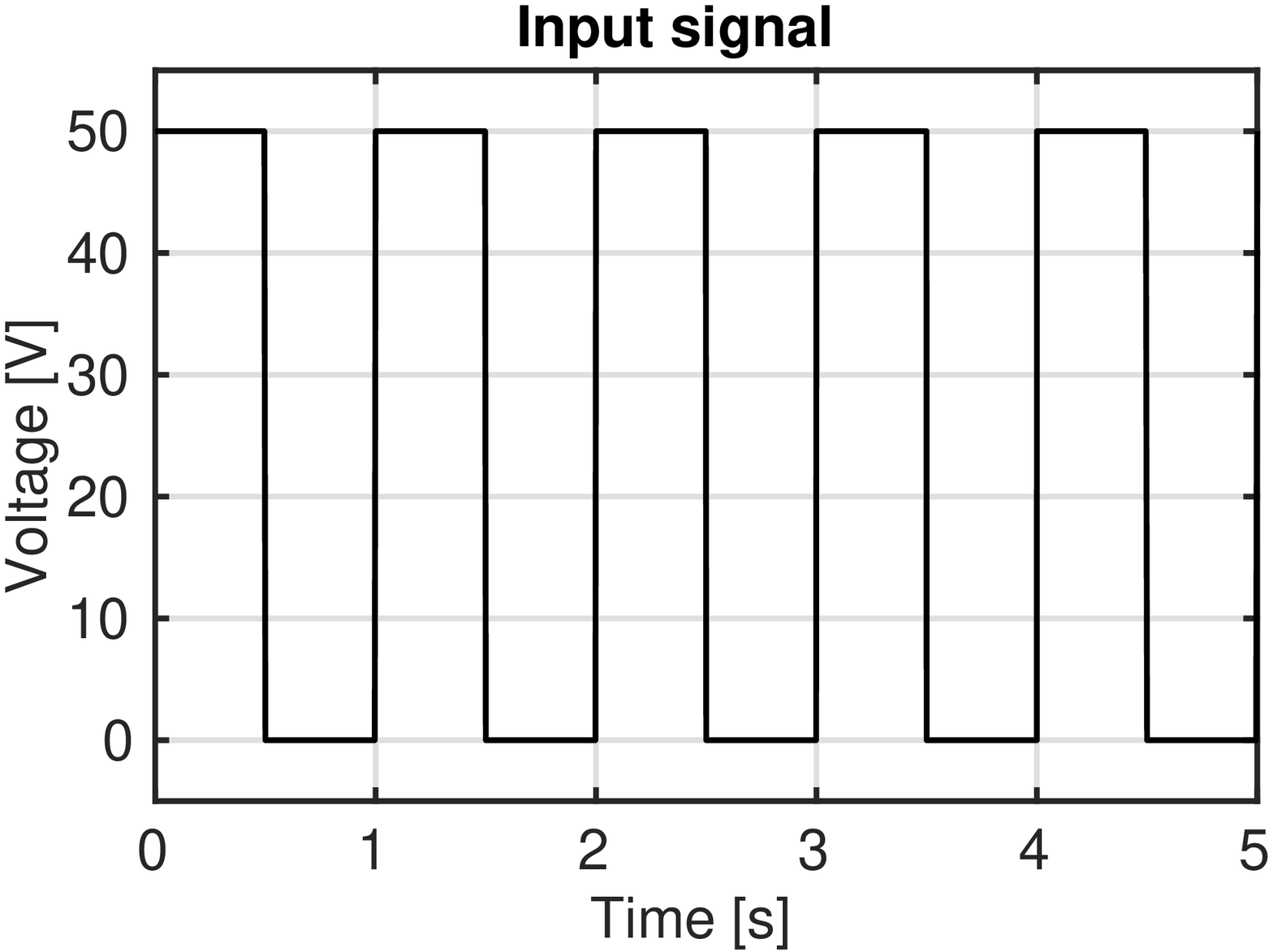}
 % umec.eps: 0x0 px, 300dpi, 0.00x0.00 cm, bb=
 \caption{Input signal $u$.}
 \label{fig:u2rlc}
\end{figure}
\begin{figure}
 \centering
 \includegraphics[scale=.34]{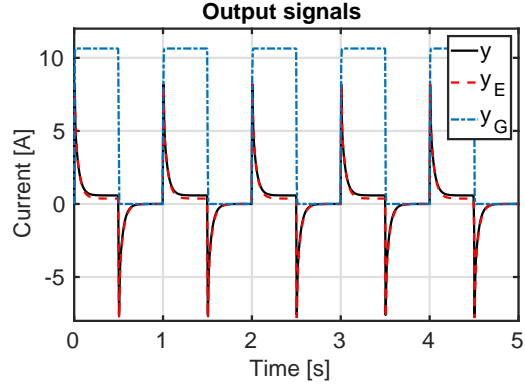}
 % ymec.eps: 0x0 px, 300dpi, 0.00x0.00 cm, bb=
 \caption{Plot of the different outputs.}
 \label{fig:ys2rlc}
\end{figure}
% \begin{figure}
%  \centering
%  \includegraphics[scale=.34]{eg2_rlc.eps}
%  % umec.eps: 0x0 px, 300dpi, 0.00x0.00 cm, bb=
%  \caption{Plot of the error $y-y_{G}$.}
%  \label{fig:eg2}
% \end{figure}
% \begin{figure}
%  \centering
%  \includegraphics[scale=.34]{ee2_rlc.eps}
%  % umec.eps: 0x0 px, 300dpi, 0.00x0.00 cm, bb=
%  \caption{Plot of the error $y-y_{E}$.}
%  \label{fig:ee2}
% \end{figure}
%%%%%%%%%%%%%%%%%%%%%%%%%%%%
\section{Concluding remarks}
\label{sec:cr}
In this paper we have provided sufficient conditions to ensure the existence of extended Gramians that are suitable to compute an error bound. Additionally, we have formulated an approach to preserve the PH structure
for the truncated system by using generalized and extended Gramians. Furthermore, we have shown that the extended balancing is a versatile tool that can be used to obtain a smaller error bound or to preserve
some particular structures, such as, an RLC structure.
%%%%%%%%%%%%%%%%%%%%%%%%%%%%
\appendix\label{app:mech}
% \subsection{Mechanical system}\label{app:mech}
 The matrices related to the mechanical example are listed below
 \scriptsize
\begin{equation}\label{Ptmec}
 \begin{array}{l}
  \breve{P}=\left[ \begin{array}{ccccc}
                    0.97 & 0.37 & 0.35 & 0.29 & 0.15 \\
                    0.37 & 0.46 & 0.44 & 0.39 & 0.26 \\
                    0.35 & 0.44 &  0.43 & 0.38 & 0.26 \\
                    0.29 & 0.39 &  0.38 & 0.35  & 0.24 \\
                    0.15 & 0.26 &  0.26 &  0.24 & 0.18 \\
                    0 & -0.13 & -0.06 & 0.03  &  0.08 \\
                    0.04 & 0 & 0  & 0  & 0 \\
                    0.15 & 0.03 &  0 &  0.04 & 0.06 \\
                    -0.05 & -0.01 & -0.02  & 0  &  0.02 \\
                    -0.07 & -0.01 & -0.02  &  -0.01 & 0 \\
                   \end{array}\right. \\ \\
\; \; \; \; \; \; \; \; \; \; \; \; \; \; \; \; \left.  \begin{array}{ccccc}
                    0 & 0.04 & 0.15 & -0.05 & -0.07 \\
                    -0.13 & 0 & 0.03 & -0.01 & -0.01 \\
                    -0.06 & 0 & 0 & -0.02 & -0.02 \\
                    0.03 & 0 & 0.04 & 0 & -0.01 \\
                    0.08 & 0 & 0.06 & 0.02 & 0 \\
                    3.77 & -0.19 & -1.56 & -0.78 & -0.55 \\
                    -0.19  & 0.04  & 0.32 & 0.15 & 0.08 \\
                    -1.56  &  0.32 & 2.52 & 1.19 & 0.63 \\
                    -0.78  & 0.15  & 1.19 & 0.58 & 0.32 \\
                    -0.55 & 0.08  & 0.63 & 0.32 & 0.18
                   \end{array}\right]
                   \end{array}
%  
%  
%  
%  
%  \begin{bmatrix}
%            0.97 & \star & \star & \star & \star & \star & \star & \star & \star & \star \\
%            0.37 & 0.46 & \star & \star & \star & \star & \star & \star & \star & \star \\
%            0.35 & 0.44 &  0.43 & \star & \star & \star & \star & \star & \star & \star \\ 
%            0.29 & 0.39 &  0.38 & 0.35  & \star & \star & \star & \star & \star & \star \\ 
%            0.15 & 0.26 &  0.26 &  0.24 & 0.18  & \star & \star & \star & \star & \star \\
%            0 &   -0.13 &  -0.06 & 0.03  &  0.08 & 3.77 &\star & \star & \star & \star \\
%            0.04 & 0 & 0  & 0  & 0  & -0.19  & 0.04  & \star & \star & \star \\
%            0.15 & 0.03 &  0 &  0.04 & 0.06  & -1.56  &  0.32 & 2.52 & \star & \star \\
%            -0.05 & -0.01 & -0.02  & 0  &  0.02 & -0.78  & 0.15  & 1.19 & 0.58 & \star \\
%            -0.07 & -0.01 & -0.02  &  -0.01 & 0  &  -0.55 & 0.08  & 0.63 & 0.32 & 0.18 
%            \end{bmatrix}
\end{equation}
\begin{equation}\label{Gamcmec}
 \begin{array}{l}
  \Gamma_c=\left[ \begin{array}{ccccc}
                    0.05 & -0.1 & -0.07 & -0.05 & -0.03 \\
                    -0.1 & 0.01 & 0 & -0.01 & 0 \\
                    -0.07 & 0 &  -0.01 & -0.02 & -0.01 \\
                    -0.05 & -0.01 & -0.02 & -0.02  & -0.01 \\
                    -0.03 & 0 &  -0.01 &  -0.01 & -0.01 \\
                    1.63 & -0.56 & -0.57 & -0.54 & -0.5 \\
                    -0.12 & 0.03 & 0.04 & 0.04 & 0.04 \\
                    -1.01 & 0.31 & 0.33 & 0.35 & 0.34 \\
                    -0.51 & 0.17 & 0.17 & 0.18 & 0.17 \\
                    -0.32 & 0.11 & 0.11 & 0.11 & 0.1 \\
                   \end{array}\right. \\ \\
\; \; \; \; \; \; \; \; \; \; \; \; \; \; \; \; \left.  \begin{array}{ccccc}
                    1.63 & -0.12 & -1.01 & -0.51 & -0.32 \\
                    -0.56 & 0.03 & 0.31 & 0.17 & 0.11 \\
                    -0.57 & 0.04 & 0.33 & 0.17 & 0.11 \\
                    -0.54 & 0.04 & 0.35 & 0.18 & 0.11 \\
                    -0.5 & 0.04 & 0.34 & 0.17 & 0.1 \\
                    -0.29 & 0.18 & 0.8 & 0.11 & -0.02 \\
                    0.18 & -0.02  & -0.11 & -0.04 & -0.02 \\
                    0.8 & -0.11 & -0.52 & -0.09 & -0.03 \\
                    0.11 &  -0.04 &  -0.09 &   0.04 & 0.04 \\
                    -0.02 &  -0.02 &   -0.03 &   0.04 &    0.04
                   \end{array}\right]
                   \end{array}
\end{equation}
%%%%%%%%%%%%%%%%%%%%%%%%%%%%

\bibliographystyle{plain}
\bibliography{bibmr}

\end{document}